\documentclass{article}

\PassOptionsToPackage{numbers,square,comma}{natbib}
\usepackage[preprint]{neurips_2026}

\usepackage[utf8]{inputenc} 
\usepackage[T1]{fontenc}    
\usepackage{hyperref}       
\usepackage{url}            
\usepackage{booktabs}       
\usepackage{amsfonts}       
\usepackage{nicefrac}       
\usepackage{microtype}      
\usepackage{xcolor}         

\newcommand{\ignore}[1]{}

\usepackage{amsmath}
\usepackage{mathtools}
\usepackage{amssymb}
\usepackage{caption}
\usepackage{algorithmic}

\usepackage[ruled,vlined]{algorithm2e}
\usepackage{graphicx}
\usepackage{textcomp}
\usepackage{xcolor}
\usepackage{verbatim}
\usepackage{subcaption}
\usepackage{adjustbox}
\usepackage{booktabs}
\usepackage{multirow}
\usepackage{threeparttable, tablefootnote}
\usepackage{varwidth,xcolor}

\usepackage{tabularray}
\usepackage{ragged2e}
\usepackage{array}
\usepackage{bm}
\usepackage{mathrsfs}
\usepackage{tabularx}
\usepackage{makecell}
\usepackage{diagbox}

\usepackage{stmaryrd}
\usepackage{amssymb}

\usepackage{amsthm}
\newtheorem{theorem}{Theorem}
\usepackage{thmtools}
\usepackage{thm-restate}
\theoremstyle{remark}

\newtheorem{lemma}[theorem]{Lemma}

\usepackage{float}

\usepackage{multicol}
\usepackage{array}
\usepackage{longtable}
\usepackage{array}
\newcolumntype{L}[1]{>{\raggedright\arraybackslash}p{#1}}

\title{Jaguar: Fast Private CNN Inference with Power-of-Two Homomorphic Arithmetic}
\author{%
Yewon Jeong\textsuperscript{*}
\quad
Nayoung Jung\textsuperscript{*}
\quad
Hyeri Roh
\quad
Woo-Seok Choi\textsuperscript{\(\dagger\)}
\\[0.1em]
Seoul National University
\\[-0.0em]
\texttt{\{yellowon518,18170lanige,hrroh,wooseokchoi\}@snu.ac.kr}
\\[-0.0em]
\textsuperscript{*}Equal contribution.
}

\begin{document}

\maketitle

\begingroup
\renewcommand\thefootnote{}
\footnotetext{\textsuperscript{\(\dagger\)}Corresponding author.}
\endgroup

\begin{abstract}
Hybrid HE/2PC private CNN inference remains bottlenecked by prime-modulus homomorphic arithmetic in convolution and by a precision flow that runs ReLU at doubled bitwidth before invoking a separate truncation protocol. We present \textbf{Jaguar}, a system built on a single design choice---a \emph{power-of-two ciphertext ring}---that addresses both. The choice enables \textsc{SPA-Conv}, a coefficient-domain convolution kernel that replaces NTT-centric polynomial multiplication with scalar--polynomial accumulation, and an \emph{exact} ciphertext-side truncation by local right shifts that lets ReLU run directly at the target fixed-point precision and eliminates the post-ReLU truncation protocol. 
Where NTT remains genuinely useful---at the client, for the single polynomial multiplication during decryption---we recover it through an auxiliary NTT prime, preserving the power-of-two protocol substrate while keeping decryption $O(N\log N)$.
On ImageNet-scale ResNet-18, ResNet-50, and MobileNetV2 with AVX disabled, Jaguar achieves 2.07--3.72$\times$ lower end-to-end latency than Cheetah and 2.16--3.36$\times$ lower than Rhombus, with 1.16--1.76$\times$ lower communication than Cheetah. 
\end{abstract}
\section{Introduction}
\label{sec:intro}

Machine learning as a service (MLaaS) is widely deployed for vision~\cite{he2016resnet}, medical~\cite{esteva2017dermatologist,kaissis2021end}, mobile~\cite{sandler2018mobilenetv2}, and embedded~\cite{lin2020mcunet,warden2019tinyml} inference. Although recent attention has shifted to transformers~\cite{vaswani2017attention, devlin2019bert, brown2020language, dosovitskiy2021image}, convolutional neural networks (CNNs) remain important in latency- and resource-constrained vision pipelines~\cite{he2016resnet,sandler2018mobilenetv2}. Remote CNN inference creates a two-sided privacy problem---the client wants to protect sensitive inputs, the server wants to protect proprietary model parameters---solved by \emph{private inference} without revealing either side's secrets~\cite{mohassel2017secureml, liu2017minionn, juvekar2018gazelle, mishra2020delphi, rathee2020cryptflow2, huang2022cheetah}.

Two paradigms have emerged.
\emph{Fully homomorphic encryption(HE)-based} inference evaluates the entire network under encryption but pays heavy compute cost and typically replaces nonlinearities with polynomial approximations, degrading accuracy~\cite{giladbachrach2016cryptonets, brutzkus2019lola, lee2022lowcomplexity, ju2024neujeans, stoian2023tfhe, roh2024flash}. 
\emph{Hybrid HE/2PC} inference evaluates linear layers under HE and nonlinear layers under secure two-party computation(2PC)~\cite{juvekar2018gazelle,mishra2020delphi,
rathee2020cryptflow2,huang2022cheetah} (Figure~\ref{fig:intro}(a)). This avoids activation approximation and is the practical design point for exact private CNN inference.

\textbf{The bottleneck is no longer 2PC---it is HE convolution}. 
On ResNet-50 under LAN, our profiling of Cheetah~\cite{huang2022cheetah,opencheetah} shows convolution alone takes 96.71\,s out of 120.82\,s, over 99\,\% of linear-layer latency. 
Within HE convolution, NTT-domain polynomial operations dominate:
forward and inverse NTTs together account for 68\,\% of convolution time, 
and modular prime reductions---appearing both inside and outside NTTs---account for 34.6\,\% (Figure~\ref{fig:intro}(b)).
The bottleneck is not implementation inefficiency---it is the NTT-prime \emph{arithmetic regime} that the protocol was built on.

\textbf{A second, less-discussed bottleneck is precision flow}. A linear layer in fixed-point inference produces output at \emph{doubled} fractional precision ($2f$ bits). Conventional hybrid systems pass this directly into ReLU and then invoke a \emph{separate} truncation protocol~\cite{rathee2020cryptflow2,huang2022cheetah,choi2022impala,gupta2022llama}. This is doubly wasteful: ReLU pays for the wider bitwidth, and an interactive 2PC truncation is added on top. 

\textbf{Why power-of-two.}
Both bottlenecks share a root cause: the \emph{prime-modulus} convention inherited from FHE library design.
NTT-friendly primes are excellent for generic polynomial multiplication, so libraries default to them.
But a hybrid HE/2PC system performs neither generic polynomial multiplication nor deep homomorphic circuits---it performs a single linear layer at a time, immediately converts to additive shares over $\mathbb{Z}_{2^\ell}$, and reuses the share domain for 2PC.
In this regime the prime modulus \emph{blocks} two capabilities the protocol very much wants:
\begin{enumerate}
    \item \textbf{Reduction-free coefficient arithmetic}.
    Convolution can be evaluated as a sum of shifted scalar--polynomial products without ever multiplying two polynomials, \emph{if} coefficient-wise multiplication and addition are cheap.
    Power-of-two arithmetic makes them cheap by replacing modular reduction with a bit mask.
    \item \textbf{Exact ciphertext-side truncation}.
    A right shift on each ciphertext component implements an exact plaintext truncation \emph{iff} the BFV scale $\Delta$ has $2^f$ as a divisor---automatic when both $q$ and $p$ are powers of two, impossible when $q$ is a product of NTT-friendly primes.
\end{enumerate}
This paper proposes \textbf{Jaguar} that adopts $q = 2^Q$, $p = 2^P$ as the central design choice.
\textsc{SPA-Conv} and pre-ReLU ciphertext truncation are then \emph{consequences} of this choice, not independent optimizations.
The remaining technical question is whether giving up NTT in the linear path costs us elsewhere;
it does not, because NTT can be recovered as a local client-side tool exactly where it pays off.


\ignore{
\begin{itemize}
    \item \textbf{End-to-end latency}: 2.07--3.72$\times$ lower than Cheetah; 2.16--3.36$\times$ lower than Rhombus~\cite{he2024Rhombus}.
    \item \textbf{Communication}: 1.16--1.76$\times$ lower than Cheetah; comparable to Rhombus (1.09$\times$ lower on ResNet-18 and 1.41$\times$ lower on MobileNetV2; 1.28$\times$ higher on ResNet-50---we discuss this case explicitly in Section~\ref{sec:experiments}).
    \item \textbf{Convolution kernel}: 79\,\% reduction in convolution latency compared to Cheetah on ResNet-50 (Figure~\ref{fig:intro}(b)).
    \item \textbf{Non-linear precision flows}: 1.98$\times$ lower ReLU+truncation communication on ResNet-50, with the truncation stage entirely eliminated (Figure~\ref{fig:intro}(c)).
    \item \textbf{Accuracy}: Fixed-point inference under Jaguar's $(f, P)=(9, 30)$ setting matches the FP32 torchvision baseline on 93.5\,\%, 96.0\,\%, and 94.5\,\% of test cases for ResNet-18, ResNet-50, and MobileNetV2, respectively. (Appendix~\ref{app:param_selection}).
\end{itemize}
}

\begin{figure}[t]
    \centering
    \includegraphics[width=\columnwidth]{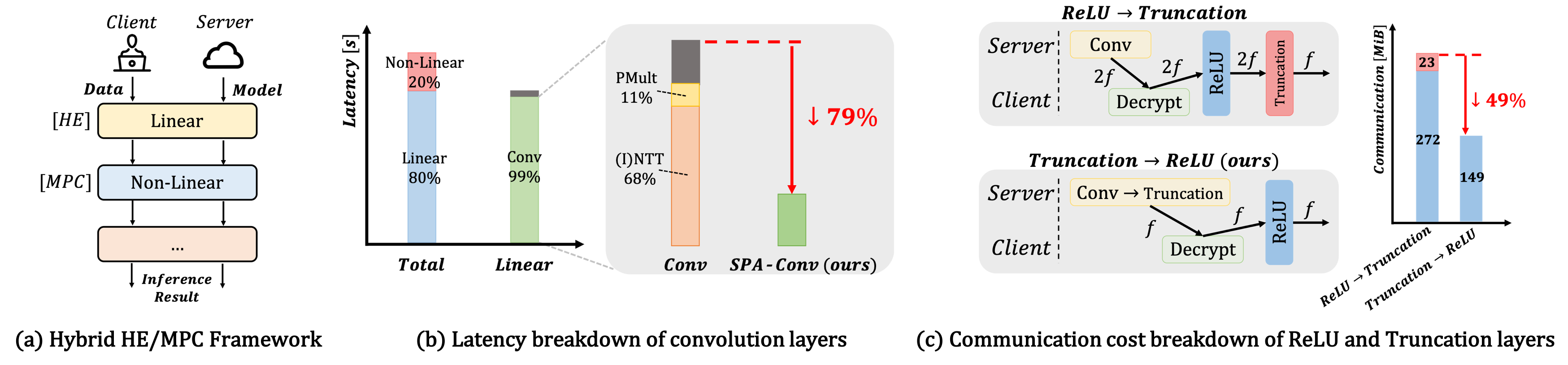}
    \vspace{-1.5em}
    \caption{
    Motivation for Jaguar's design choices.
    (a) Hybrid HE/2PC private inference framework. 
    (b) Latency breakdown of Cheetah's ResNet-50 convolutions under LAN.
    (c) Communication of the ReLU+truncation stage on ResNet-50.
    }
    \label{fig:intro}
\end{figure}

\textbf{Contributions.}
We implement Jaguar on ImageNet-scale ResNet-18, ResNet-50, and MobileNetV2 under both LAN and WAN. With AVX disabled,
end-to-end latency is \textbf{2.07--3.72$\times$ lower than Cheetah} and \textbf{2.16--3.36$\times$ lower than Rhombus}~\cite{he2024Rhombus};
communication is 1.16--1.76$\times$ lower than Cheetah and comparable to Rhombus (1.09$\times$ and 1.41$\times$ lower on ResNet-18 and MobileNetV2; 1.28$\times$ higher on ResNet-50---we discuss this case explicitly in Section~\ref{sec:experiments}).
Convolution-kernel latency drops 79\,\% on ResNet-50 (Figure~\ref{fig:intro}(b));
ReLU+truncation communication drops 1.98$\times$, with the truncation stage entirely eliminated (Figure~\ref{fig:intro}(c)). 
The technical contributions are:

\begin{itemize}
    \item \textbf{A power-of-two HE arithmetic substrate} for hybrid HE/2PC private CNN inference.
    To our knowledge, Jaguar is the first HE/2PC system to use a power-of-two \emph{ciphertext} ring as the central arithmetic regime for ImageNet-scale CNN inference. Prior coefficient-encoded systems used a power-of-two plaintext/share ring but kept the ciphertext backend on NTT-friendly primes.

    \item \textbf{\textsc{SPA-Conv}}, a coefficient-domain convolution kernel that evaluates pointwise, dense, and depthwise CNN convolutions as sums of scalar--polynomial products and coefficient shifts, with no plaintext--ciphertext polynomial multiplication, NTT, or modular prime reduction.

    \item \textbf{Exact pre-ReLU ciphertext-side truncation,} a zero-communication operation that removes the separate post-ReLU truncation protocol used by all prior hybrid HE/2PC systems and reduces the bitwidth at which ReLU is evaluated. We prove its correctness (Theorem~\ref{thm:precise_trunc}).

    \item \textbf{NTT-assisted decryption,} a client-local technique that recovers NTT exactly where it is genuinely useful---the single $\hat{c}'_1\cdot\mathsf{sk}$ multiplication during decryption---via an auxiliary NTT prime, without disturbing the power-of-two protocol substrate.
\end{itemize}
\section{Preliminaries}
\label{sec:background}

\textbf{Notations.}
For a positive integer $n$, $[n]=\{0,\ldots,n-1\}$. We use lower-case letters with a "hat" symbol for polynomials, bold lower-case for vectors, bold upper-case for matrices, and calligraphic for tensors. Let $R_q=\mathbb{Z}_q[X]/(X^N+1)$ be the negacyclic polynomial ring of degree $N$; for $\hat{x}(X)\in R_q$, $\hat{x}[i]$ is its $i$-th coefficient, and $\rho_\eta(\hat{x})=X^\eta \hat{x} \bmod (X^N+1)$ is a coefficient shift by offset $\eta$. A convolution layer has input shape $(H,W,C_{\mathrm{in}})$, kernel shape $(K,K,C_{\mathrm{in}},C_{\mathrm{out}})$, stride $s_{\mathrm{stride}}$, and padding $p_{\mathrm{pad}}$. Values are fixed-point integers with $f$ fractional bits; a linear product has precision $2f$, and $x\gg\tau$ denotes arithmetic right shift by $\tau$ bits. We write $\langle x\rangle=(\langle x\rangle_C,\langle x\rangle_S)$ for additive sharing over $\mathbb{Z}_{2^\ell}$, with $\langle x\rangle_C+\langle x\rangle_S\equiv x\pmod{2^\ell}$.
A complete notation table is provided in Appendix~\ref{app:notations}.

\subsection{Cryptographic Primitives}
\textbf{Hybrid HE/2PC inference.}
Jaguar follows the standard hybrid interface: HE evaluates linear layers, 2PC evaluates nonlinearities~\cite{juvekar2018gazelle, mishra2020delphi,rathee2020cryptflow2,huang2022cheetah}. We instantiate HE with BFV~\cite{fan2012somewhat}. 
At the HE$\rightarrow$2PC boundary, the server masks an encrypted output $\mathsf{Enc}(\hat{y})$ with $\hat{r}$, sends $\mathsf{Enc}(\hat{y}-\hat{r})$ to the client, and keeps $\hat{r}$ as its share; the client decrypts to obtain the complementary share. Each HE block is converted to shares before the next nonlinear layer, so bootstrapping is unnecessary.


\textbf{Encoding and arithmetic regimes.}
\emph{SIMD encoding}~\cite{juvekar2018gazelle,pang2024bolt,roh2024hyena} 
packs values into transform-domain slots defined by the number-theoretic transform (NTT), the finite-field analogue of the FFT~\cite{pollard1971fast}.
This batching amortizes homomorphic operations but requires a
\emph{prime} plaintext modulus, making the share domain prime-field rather than $\mathbb{Z}_{2^\ell}$ and creating friction with ring-based 2PC.
\emph{Coefficient encoding}~\cite{huang2022cheetah,he2024Rhombus} maps values directly to polynomial coefficients; with careful arrangement, polynomial multiplication realizes convolutions, and Cheetah-style systems~\cite{huang2022cheetah} can avoid costly rotations and use a power-of-two plaintext/share domain. 
\textbf{An important subtlety}: even systems using a power-of-two \emph{plaintext} ring still use a prime-modulus \emph{ciphertext} backend with RNS prime limbs, so plaintext--ciphertext products incur NTT-domain arithmetic and prime-modulus reductions. Jaguar keeps coefficient encoding but changes the ciphertext substrate itself.

\textbf{Threat model.}
Standard two-party private inference: the client holds a private input, the server a proprietary model. The client learns only the final prediction; the server learns nothing beyond what the protocol transcript and output policy reveal. As in prior art~\cite{juvekar2018gazelle, huang2022cheetah, pang2024bolt, hao2022iron}, we assume an honest-but-curious adversary: parties follow the protocol but may infer additional information from messages. Network architecture, tensor shapes, and protocol parameters are public. Security follows by composing the security of BFV, additive secret sharing, and the underlying 2PC protocols.

\subsection{Related Work}

Prior work has improved either the mapping of linear layers to secure computation or the nonlinear protocols (Table~\ref{tab:related_works}).
Gazelle~\cite{juvekar2018gazelle} uses SIMD with a prime plaintext modulus; Cheetah~\cite{huang2022cheetah} introduces coefficient-encoded, rotation-free linear protocols with lean OT-based nonlinearities;
CrypTFlow2~\cite{rathee2020cryptflow2} focuses on faithful 2PC primitives;
Rhombus~\cite{he2024Rhombus} reduces output communication for matrix-vector multiplication via input-output packing; 
Falcon~\cite{xu2023falcon} specializes packing for depthwise convolutions.
Orthogonally, PrivCirNet~\cite{xu2024privcirnet} modifies the model to reduce HE linear cost.

\textbf{Why hasn't power-of-two been tried before?}
Two reasons.
First, the standard FHE library backend (e.g., Microsoft SEAL~\cite{sealcrypto}) is built around NTT-friendly RNS primes;
switching to power-of-two requires re-deriving the convolution kernel because generic plaintext--ciphertext polynomial multiplication via NTT becomes unavailable.
Second, the perceived cost of giving up NTT looks prohibitive \emph{if one assumes generic polynomial multiplication is needed}.
Our \textsc{SPA-Conv} shows that for CNN convolution, scalar--polynomial products and coefficient shifts suffice, and even the residual NTT need (decryption) can be handled locally by the client.
Once both observations land, the prime-modulus rationale weakens and the power-of-two regime becomes the natural choice.
\section{Proposed Techniques}
\label{sec:proposed}

This section develops Jaguar's contributions in the order they motivate each other.
Section~\ref{sec:zero-cost truncation} starts from the precision-flow mismatch and proves an exact ciphertext-side truncation.
Section~\ref{subsec:power-of-two_regime} shows that the divisibility condition for this truncation, together with $\mathbb{Z}_{2^\ell}$ share-domain alignment, makes a power-of-two ciphertext modulus natural. 
Section~\ref{sec:conv in 2^l} presents \textsc{SPA-Conv}, the convolution kernel for this regime.
Section~\ref{sec:decryption} closes the loop by handling decryption.

\subsection{Zero-Cost Precise Truncation}
\label{sec:zero-cost truncation}

\textbf{Motivation.}
A linear layer multiplies fixed-point inputs and weights with $f$ fractional bits each, producing output at $2f$-bit precision.
Conventional hybrid systems hand this to the ReLU protocol and then run a separate 2PC truncation~\cite{rathee2020cryptflow2,huang2022cheetah}.
Both steps cost interaction.
Our goal: truncate \emph{before} ReLU, with no 2PC interaction.

\textbf{Construction.}
Let $(\hat{c}_0,\hat{c}_1) \in R_q^2$ be a BFV ciphertext encrypting a plaintext $\hat{m}_0$, with $[\hat{c}_0 + \hat{c}_1\hat{s}]_q = \Delta \hat{m}_0+\hat{\epsilon}$ for noise $\hat{\epsilon}$ and scale $\Delta=\lfloor q/p\rfloor$.
Algorithm~\ref{alg:trunc} truncates each ciphertext component by an arithmetic right shift,
implemented as logical shifts on unsigned residues  (Lemma~\ref{lem:unsigned_shift}).
Algorithm~\ref{alg:dec} decrypts the truncated ciphertext using a \emph{two-step} divide-and-round.

\begin{algorithm}[H]
\caption{Proposed truncation}
\label{alg:trunc}
\DontPrintSemicolon
\SetArgSty{textnormal}
\KwIn{
Ciphertext $\llbracket \hat{m}_0\rrbracket=(\hat{c}_0, \hat{c}_1) \in R_q^2$, truncation parameter $f$
}
\KwOut{Ciphertext $\llbracket \hat{m}_1\rrbracket=(\hat{c}'_0, \hat{c}'_1)$ where $\hat{m}_1[i] = (\hat{m}_0[i] \gg f)$}

    \For {$i$ \text{  $\gets 0$ to $N-1$}} {

        $\hat{c}'_0[i] = (\hat{c}_0[i] \gg f)$\;
        $\hat{c}'_1[i] = (\hat{c}_1[i] \gg f)$\;
    }
\end{algorithm}

\begin{algorithm}[H]
\caption{Decryption algorithm for $f$-bit truncated ciphertext}
\label{alg:dec}
\DontPrintSemicolon
\SetArgSty{textnormal}
\KwIn{
Secret key $\mathsf{sk}=\hat{s} \in R_{q}$, ciphertext $\llbracket \hat{m}_1\rrbracket=(\hat{c}'_0, \hat{c}'_1) \in R_{q\gg f}^2$ obtained from Algorithm~\ref{alg:trunc}
}
\KwOut{Plaintext $\hat{m}_1 \in R_{p\gg f}$}

    $\hat{m}_1 \gets \lfloor \frac{1}{2^f} \lfloor \frac{2^f}{\Delta}[\hat{c}'_0 + \hat{c}'_1\hat{s}]_{q\gg f} \rceil \rfloor \;(\textrm{mod} \; {p\gg f})$
\end{algorithm}

\begin{restatable}{theorem}{precisetruncthm}
\label{thm:precise_trunc}
    Algorithm~\ref{alg:trunc} implements exact $f$-bit truncation, decryptable by Algorithm~\ref{alg:dec}, provided that the ciphertext noise satisfies $||\hat{\epsilon}||_\infty < \Delta/2 - (N+1)2^f$. (Proof in Appendix~\ref{app:proof}.)
\end{restatable}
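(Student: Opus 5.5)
The proof follows one identity through the shift. The setting is $q=2^Q$, $p=2^P$, so $\Delta=2^{Q-P}$ is divisible by $2^f$ and $q\gg f = 2^{Q-f}$.

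\textbf{Step 1: write each shifted component with its remainder.} View each coefficient of $\hat c_j$ as its unsigned residue in $[0,q)$. By Lemma~\ref{lem:unsigned_shift}, the logical shift satisfies
\[
\hat c_j[i] = 2^f\hat c'_j[i] + \hat r_j[i], \qquad 0\le \hat r_j[i]<2^f .
\]
As a polynomial identity this reads $\hat c_j = 2^f\hat c'_j+\hat r_j$ with $\|\hat r_j\|_\infty<2^f$.

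\textbf{Step 2: relate the truncated phase to the original phase.} Substituting Step 1 into the decryption equation gives
\[
2^f(\hat c'_0+\hat c'_1\hat s) + \hat r_0 + \hat r_1\hat s \equiv \Delta\hat m_0+\hat\epsilon \pmod{q}.
\]
Dividing the modulus by $2^f$ yields
\[
2^f\big[\hat c'_0+\hat c'_1\hat s\big]_{q\gg f} \equiv \Delta\hat m_0 + \hat\epsilon - \hat r_0-\hat r_1\hat s \pmod q .
\]
Set $\hat e'=\hat\epsilon-\hat r_0-\hat r_1\hat s$. Take $\hat s$ ternary, which is standard for BFV and which I would state explicitly. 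Then $\|\hat r_1\hat s\|_\infty\le N(2^f-1)$, so
\[
\|\hat e'\|_\infty<\|\hat\epsilon\|_\infty+(N+1)2^f .
\]
This is exactly the term appearing in the theorem's hypothesis.

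\textbf{Step 3: read off the plaintext from the two divisions.} Write $\hat m_0 = 2^f\hat m_1+\hat u$ with $0\le\hat u[i]<2^f$, where $\hat m_1=\hat m_0\gg f$ is the arithmetic shift of the signed plaintext. Note that $\Delta\hat m_0$ is defined modulo $q$ precisely because $\Delta p=q$.

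The inner step of Algorithm~\ref{alg:dec} multiplies the truncated phase by $2^f/\Delta$. The scaled phase is
\[
\frac{2^f}{\Delta}\big[\hat c'_0+\hat c'_1\hat s\big]_{q\gg f} \equiv \hat m_0 + \frac{\hat e'}{\Delta} \pmod{p}.
\]
Since $\|\hat e'\|_\infty<\Delta/2$ by the hypothesis, rounding returns $\hat m_0 \bmod p$ exactly. The outer step divides by $2^f$ and floors, which gives $2^f\hat m_1+\hat u$ divided by $2^f$ and floored, that is $\hat m_1$ modulo $p\gg f$.

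\textbf{Main obstacle: the modular reductions across the two steps.} The delicate point is the bookkeeping of these reductions. One must check three things:
\begin{itemize}
  \item reducing modulo $q\gg f$ before scaling corresponds to reducing modulo $p$ after scaling;
  \item the floor-of-$2^f$ step commutes with reduction modulo $p$ versus $p\gg f$, which holds because $2^f$ divides $p$;
  \item the signed/unsigned convention for $\hat m_0$ agrees with the arithmetic-shift semantics of the output.
\end{itemize}
I would handle all three by working with integer representatives throughout and reducing only at the end, appealing to Lemma~\ref{lem:unsigned_shift} for the sign convention.
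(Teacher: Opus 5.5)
Your proposal is correct and follows essentially the same route as the paper's proof. Both split each shifted component into quotient plus a remainder in $[0,2^f)$, use the ternary key to bound the induced extra error by $(N+1)2^f$ so that the total stays below $\Delta/2$, let the inner rounding recover $\hat m_0$, and let the outer floor give $\hat m_0 \gg f$. Your explicit remainder polynomial with mod-$q$/mod-$p$ bookkeeping is a cleaner version of the paper's floor-inequality chain, and it spells out the wrap-around term and the outer-floor-modulo-$p \gg f$ step that the paper leaves implicit.
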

\ignore{
\begin{proof}
    Since $(\mathbf{c_0}, \mathbf{c_1})$ encrypts $\mathbf{m_0}$, we have $[\mathbf{c_0} + \mathbf{c_1}\mathbf{s}]_q = \Delta\mathbf{m_0} + \mathbf{v}$ for some noise polynomial $\mathbf{v}$. Hence, for some integer-coefficient polynomial $\mathbf{r}$,
    \[
        \frac{\mathbf{c_0} + \mathbf{c_1}\mathbf{s}}{2^f} = \frac{\Delta\mathbf{m_0}}{2^f} + \frac{\mathbf{v}}{2^f} + \frac{q}{2^f}\mathbf{r}
    \]
    Using $\lfloor x \rfloor \leq x < \lfloor x \rfloor+1$ for $\forall x \in \mathbb{R}$, and $\mathbf{s}[i] \in \{-1, 0, 1\}$ for $\forall i \in [n]$, 
    \[
       (\frac{\mathbf{c_0} + \mathbf{c_1}\mathbf{s}}{2^f})[i] - n - 1 < (\mathbf{c_2} + \mathbf{c_3}\mathbf{s})[i] \leq (\frac{\mathbf{c_0} + \mathbf{c_1}\mathbf{s}}{2^f})[i], \quad \forall i \in [n]
    \]
    \[
        \Rightarrow \; (\frac{\Delta\mathbf{m_0}}{2^f} + \frac{\mathbf{v}}{2^f} + \frac{q}{2^f}\mathbf{r})[i] - n - 1 < (\mathbf{c_2} + \mathbf{c_3}\mathbf{s})[i] \leq (\frac{\Delta\mathbf{m_0}}{2^f} + \frac{\mathbf{v}}{2^f} + \frac{q}{2^f}\mathbf{r})[i]
    \]
    \[
        \Rightarrow \; (\mathbf{m_0} + \frac{\mathbf{v}}{\Delta})[i] - \frac{(n+1)2^f}{\Delta} < (\frac{2^f}{\Delta}[\mathbf{c_2} + \mathbf{c_3}\mathbf{s}]_{q\gg f})[i] \leq (\mathbf{m_0} + \frac{\mathbf{v}}{\Delta})[i]
    \]
    As long as $||\mathbf{v}||_\infty < \Delta/2 - (n+1)2^f$, we get $\lfloor \frac{2^f}{\Delta}[\mathbf{c_2} + \mathbf{c_3}\mathbf{s}]_{q\gg f} \rceil = \mathbf{m_0}$. Therefore, Algorithm~\ref{alg:dec} returns the plaintext $\mathbf{m_1}$, where $\mathbf{m_1}[i] = (\mathbf{m_0}[i] \gg f)$.
\end{proof}
}
    \textbf{Why two-step divide-and-round.}
    Conventional BFV decryption~\citep{fan2012somewhat} does $\lfloor\frac{1}{\Delta}[\cdot]_{} \rceil$ in one step. 
    Algorithm~\ref{alg:dec} first divides by $\Delta/2^f$, rounds, then truncates by $f$ bits.
    This separation is what makes the result \emph{exact}: 
    because $\Delta$ is divisible by $2^f$ when both $q$ and $p$ are powers of two, the inner divide-by-$\Delta/2^f$ then the rounding cleanly removes the noise polynomial as shown in the proof.
    A single-step decryption applied to a shifted ciphertext would conflate noise with truncation rounding.

    \textbf{Noise budget.}
    Under our parameters $(N, Q, P) = (2048, 54, 30)$ and $f=9$, Theorem~\ref{thm:precise_trunc} requires accumulated noise below $2^{23}-2049\cdot 2^9 \approx7.34\times 10^6$.
    Each Jaguar linear block has multiplicative depth one before share conversion: noise in a fresh ciphertext ($\sigma_\mathrm{in}=3.2$) is scaled by the plaintext filter coefficients and accumulated across \textsc{SPA-Conv} terms.
    Across all three evaluated networks, the worst maximum-accumulation layer is a 960-term pointwise layer in MobileNetV2, which retains a $25.9\sigma$ margin to Theorem~\ref{thm:precise_trunc}'s bound; the corresponding ResNet-18 and ResNet-50 margins are much larger (details in Appendix~\ref{app:noise_budget}).

\subsection{Power-of-Two Arithmetic Regime}
\label{subsec:power-of-two_regime}

\textbf{Modulus choice.}
Security in BFV rests on the concrete hardness of the underlying RLWE instance---ring dimension, modulus, secret distribution, error distribution---and \emph{does not require $q$ to be prime}~\cite{fan2012somewhat,homencstandard,lyubashevsky2013toolkit}.
The prime/RNS convention in HE libraries is implementation-driven, not BFV-driven~\cite{sealmanual}.
Jaguar adopts $q=2^Q$, $p=2^P$, so the BFV scale $\Delta=q/p=2^{Q-P}$ is also a power of two. 
This single choice does three things: 
\textbf{1) Satisfies Theorem~\ref{thm:precise_trunc} by construction;}
    A $f$-bit ciphertext truncation is exact iff $2^f | \Delta$. With $\Delta=2^{Q-P}$, every $f \leq Q-P$ qualifies.
\textbf{2) Aligns HE outputs with the $\mathbb{Z}_{2^\ell}$ share domain;}
    Subsequent 2PC operates over $\mathbb{Z}_{2^\ell}$. Using $p=2^P$ avoids the share-domain mismatch SIMD-encoded systems must bridge.
\textbf{3) Replaces modular reduction with bit masking;}
    Reduction modulo $2^Q$ is a single bitwise AND with $2^Q-1$.

\textbf{Correctness and security.}
The arithmetic backend changes; the privacy goal does not. Correctness follows the standard BFV requirement that accumulated noise stays below $\Delta/2$ minus the truncation slack of Theorem~\ref{thm:precise_trunc};
depth-one linear blocks make this modest. 
Security depends on the concrete RLWE hardness of $(N, Q, \chi_e)$, not on whether $Q$ is the bit-length of a prime.
We choose $(N, Q, P) = (2048, 54, 30)$ to satisfy both correctness and 128-bit classical security (Appendix~\ref{app:security_guarantee}).

\textbf{Algorithmic consequence.}
Jaguar cannot use the standard NTT-friendly prime backend for convolution. The convolution kernel must be expressed in terms the power-of-two ring supports natively---scalar--polynomial products, coefficient shifts, additions, and bit operations---which Sec.~\ref{sec:conv in 2^l} (\textsc{SPA-Conv}) provides.
The single $\hat{c}'_1\cdot\mathsf{sk}$ multiplication needed for decryption is handled by Sec.~\ref{sec:decryption}.

\subsection{\textsc{SPA-Conv}: Scalar--Polynomial Accumulation Convolution}
\label{sec:conv in 2^l}


\begin{figure}[t]
    \centering
    \includegraphics[width=\columnwidth]{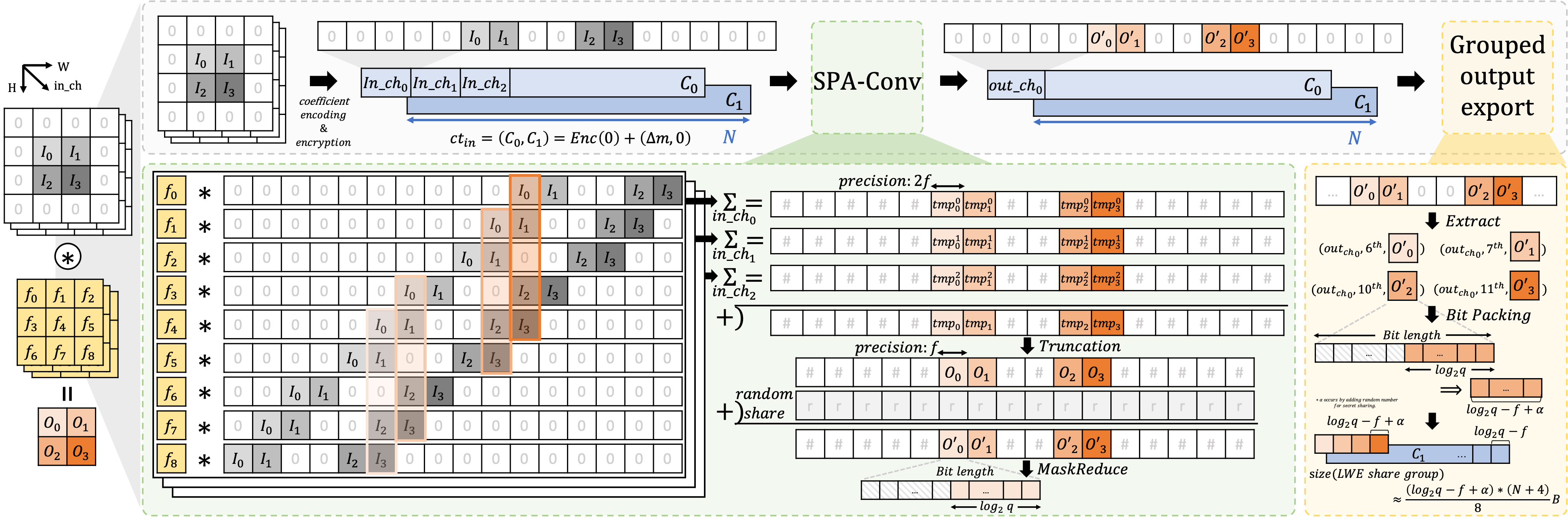}
    \caption{
    Computation flow of \textsc{SPA-Conv}.
    }
    \label{fig:conv}
    \vspace{-1em}
\end{figure}

\textbf{Intuition.}
A convolution is a sum of products of \emph{scalar} filter weights and \emph{spatial} input neighborhoods.
If input channels are coefficient-encoded into RLWE ciphertexts, then for each filter coefficient $w_{o,c,u,v}$, the contribution to output channel $o$ at spatial offset $(u, v)$ is \emph{one scalar} times \emph{one shifted ciphertext}.
Aggregating across input channels and kernel offsets yields the full output.
{No polynomial-by-polynomial multiplication is needed in the linear path; therefore \textbf{no server-side NTT is needed}.}

\textbf{Setup.}
Let $X=\{\mathrm{CT}_t\}_{t=0}^{M-1}$ be encrypted input blocks. A public packing layout $\Lambda(c,j)=(\iota,\sigma)$ returns the source ciphertext index $\iota$ and in-ciphertext coefficient offset $\sigma$ for input channel $c$ and output block $j$, 
covering both \emph{packed} (multiple channels per ciphertext) and \emph{split} (one channel across multiple ciphertexts) layouts. Let $B$ be the number of output blocks per output channel.
For an $K\times K$ kernel with $K=2r+1$, define the spatial-offset set $\Omega_K=\{-r,\ldots,r\}^2$; with padded row width $W_{\mathrm{pp}}$, the coefficient offset for spatial displacement $(u,v)$ is
    $\delta(u,v)=uW_{\mathrm{pp}}+v$.
For output channel $o$ and block $j$, let $\mathcal{C}(o)$ denote the contributing input channels: $\mathcal{C}(o)=[C_{\mathrm{in}}]$ for dense and pointwise convolutions, and $\mathcal{C}(o)=\{o\}$ for depthwise convolution. 

\textbf{Kernel.}
\textsc{SPA-Conv} forms the raw encrypted output block
\[
    A_{o,j}
    =
    \sum_{c\in \mathcal{C}(o)}
    \sum_{(u,v)\in \Omega_K}
    w_{o,c,u,v}\cdot
    \rho_{\sigma(c,j)+\delta(u,v)}
    \!\left(\mathrm{CT}_{\iota(c,j)}\right),
    \qquad
    \Lambda(c,j)=(\iota(c,j),\sigma(c,j)),
\]
where 
$\rho_{\alpha}(\cdot)$ is the negacyclic coefficient shift by offset $\alpha$. 
Each term is a {scalar--polynomial product} (one plaintext scalar applied to both RLWE components), \emph{not} a plaintext--ciphertext polynomial multiplication.
Jaguar then applies ciphertext-side truncation and power-of-two reduction:
\[
    Y_{o,j}=\mathrm{MaskReduce}\!\left(\mathrm{Trunc}_{f}(A_{o,j})\right).
\]
The output crosses the HE$\rightarrow$2PC boundary already at target precision (Section~\ref{sec:jaguar_protocol}). 
The kernel covers all CNN convolution types: pointwise sets $\Omega_K=\{(0,0)\}$ (channel-wise scalar accumulation with no spatial shifts); dense $3\times3$ accumulates nine shifted neighborhoods per input channel; depthwise uses the same shifted accumulation pattern but with $\mathcal{C}(o)=\{o\}$. Detailed algorithms are in Appendix~\ref{app:SPA-Conv_algo}.

\begin{table*}[t]
\caption{
Kernel-level complexity of secure convolution in \emph{estimated 64-bit multiplications}.
Baselines are optimistic lower-envelope estimates under native packing assumptions, using
$\Gamma_{\mathrm{mul}}=6L(\log_2 N+2)$ and
$\Gamma_{\mathrm{rot}}=3\big((\frac{\log_2 N}{2}+2)L^2+(\log_2 N+2)L\big)$; details in Appendix~\ref{app:conv_complexity_derivation}.
$L$ is the number of RNS limbs; $S_{\mathrm{out}}=H'W'$ is the output size; 
$C_w$ is Falcon's per-polynomial channel packing.}
\centering
\footnotesize
\setlength{\tabcolsep}{2.5pt}
\renewcommand{\arraystretch}{0.92}
\setlength{\aboverulesep}{0.35ex}
\setlength{\belowrulesep}{0.35ex}
\begin{tabular*}{\linewidth}{@{\extracolsep{\fill}}ccccc@{}}
\toprule
Category & Method & HE-PMult & HE-Rot & Scalar-Poly \\
\midrule

\multirow{3}{*}{Dense}
& Gazelle
& $HW C_{\mathrm{in}}C_{\mathrm{out}}\Gamma_{\mathrm{mul}}$
& $(HW(C_{\mathrm{in}}+C_{\mathrm{out}})+C_{\mathrm{out}}N)\Gamma_{\mathrm{rot}}$
& $-$ \\

& Cheetah
& $HW C_{\mathrm{in}}C_{\mathrm{out}}\Gamma_{\mathrm{mul}}$
& $0$
& $-$ \\

& \textbf{Jaguar}
& $\mathbf{0}$
& $\mathbf{0}$
& $2NB C_{\mathrm{in}}C_{\mathrm{out}}K^2$ \\

\midrule

\multirow{2}{*}{PW}
& Rhombus-V2
& $S_{\mathrm{out}} C_{\mathrm{in}}C_{\mathrm{out}}\Gamma_{\mathrm{mul}}$
& $\sqrt{S_{\mathrm{out}} C_{\mathrm{in}}C_{\mathrm{out}}N}\Gamma_{\mathrm{rot}}$
& $-$ \\

& \textbf{Jaguar}
& $\mathbf{0}$
& $\mathbf{0}$
& $2NB C_{\mathrm{in}}C_{\mathrm{out}}$ \\

\midrule

\multirow{2}{*}{DW}
& Falcon$^\dagger$
& $\frac{C_{\mathrm{out}}}{C_w}N\Gamma_{\mathrm{mul}}$
& $0$
& $-$ \\

& \textbf{Jaguar}
& $\mathbf{0}$
& $\mathbf{0}$
& $2NB C_{\mathrm{out}}K^2$ \\

\bottomrule
\end{tabular*}
\vspace{0mm}
\label{tab:conv_complexity}
\vspace{-2mm}
\end{table*}

\textbf{Cost.}
Let $T_o=|\mathcal{C}(o)|K^2$. 
Each output block costs $2NT_o$ scalar multiplications (two RLWE components$\times$$N$ coefficients$\times$ $T_o$ shifted scalar terms), plus linear-time shifts, additions, and masks. Summed over $B$ output blocks per channel, \textsc{SPA-Conv} performs $2NBK^2\sum_{o=0}^{C_{\mathrm{out}}-1}|\mathcal{C}(o)|$ coefficient-wise scalar multiplications, scaling as
$\Theta(NBC_{\mathrm{out}}C_{\mathrm{in}}K^2)$ for dense,
$\Theta(NBC_{\mathrm{out}}C_{\mathrm{in}})$ for pointwise, and
$\Theta(NBC_{\mathrm{out}}K^2)$ for depthwise
(Table~\ref{tab:conv_complexity}).
The HE-PMult and HE-Rot columns are zero for Jaguar across all convolution types.
Cost shifts entirely to scalar--polynomial products in a power-of-two ring, where per-operation cost is dominated by integer multiply-and-mask rather than by NTT and modular prime reduction.

\subsection{NTT-Assisted Decryption}
\label{sec:decryption}

A power-of-two $q$ removes NTT from the linear path,
but the client must still compute one polynomial product $\hat{c}'_1\cdot\mathsf{sk}$ during decryption (Algorithm~\ref{alg:dec})---part of BFV decryption itself.
Naively doing this in the power-of-two ring without NTT costs $O(N^2)$, which would dominate at $N=2048$.

\textbf{Construction.}
Jaguar performs $\hat{c}'_1 \cdot \mathsf{sk}$ via NTT \emph{over an auxiliary NTT-friendly} prime $q_{\mathrm{ntt}}$ chosen large enough that the integer-valued coefficients of the product do not wrap modulo $q_{\mathrm{ntt}}$.
The relevant coefficients are then mapped back to residues modulo $2^Q$.
The auxiliary prime never appears in any ciphertext, plaintext, share, or protocol message---it is purely a client-side computational shortcut.

\textbf{Choosing $q_\mathrm{ntt}$.}
With ternary secret keys of Hamming weight $h = \|\mathsf{sk}\|_0 \leq N$, after $f$-bit ciphertext truncation the coefficients of $\hat{c}'_1$ have magnitude below $2^{Q-f}$, so each coefficient of $\hat{c}'_1 \cdot \mathsf{sk}$ is bounded by $h\cdot 2^{Q-f}$.
The sufficient condition is $q_{\mathrm{ntt}}/2 > h\cdot 2^{Q-f}$.
For $N=2048, Q=54, f=9$, we have $h \leq 2^{11}$, giving a bound of $2^{56}$; 
a 60-bit NTT-friendly prime suffices, with margin.

This preserves Jaguar's design:
the ciphertext modulus, share domain, server-side arithmetic, and all messages on the wire remain power-of-two, so Theorem~\ref{thm:precise_trunc} still applies, $\mathbb{Z}_{2^\ell}$ alignment holds, and the server still pays no NTT cost.
We simply use the right tool for the right job---power-of-two arithmetic where it matters for the protocol;
NTT where it matters for asymptotic efficiency of a local product.
Decryption stays $O(N\log N)$.
\section{Jaguar Protocol}
\label{sec:jaguar_protocol}

\vspace{-2mm}
\noindent
\begin{minipage}[t]{0.60\linewidth}
\vspace{0pt}
\subsection{Protocol Overview}
Across layers, activations are maintained as additive shares over $\mathbb{Z}_{2^\ell}$. To enter a linear layer, the client encrypts its share; 
the server runs SPA-CONV on the encrypted share, applies ciphertext-side
truncation, masks the target-precision ciphertexts with fresh random output
shares, and exports additive shares to the 2PC nonlinear protocol.
After the nonlinearity, the client re-encrypts for the next linear layer (Figure~\ref{fig:jaguar_overview}).
\end{minipage}
\hfill
\begin{minipage}[t]{0.40\linewidth}
\vspace{0pt}
\centering
\vspace{-1mm}
\includegraphics[width=0.80\linewidth]{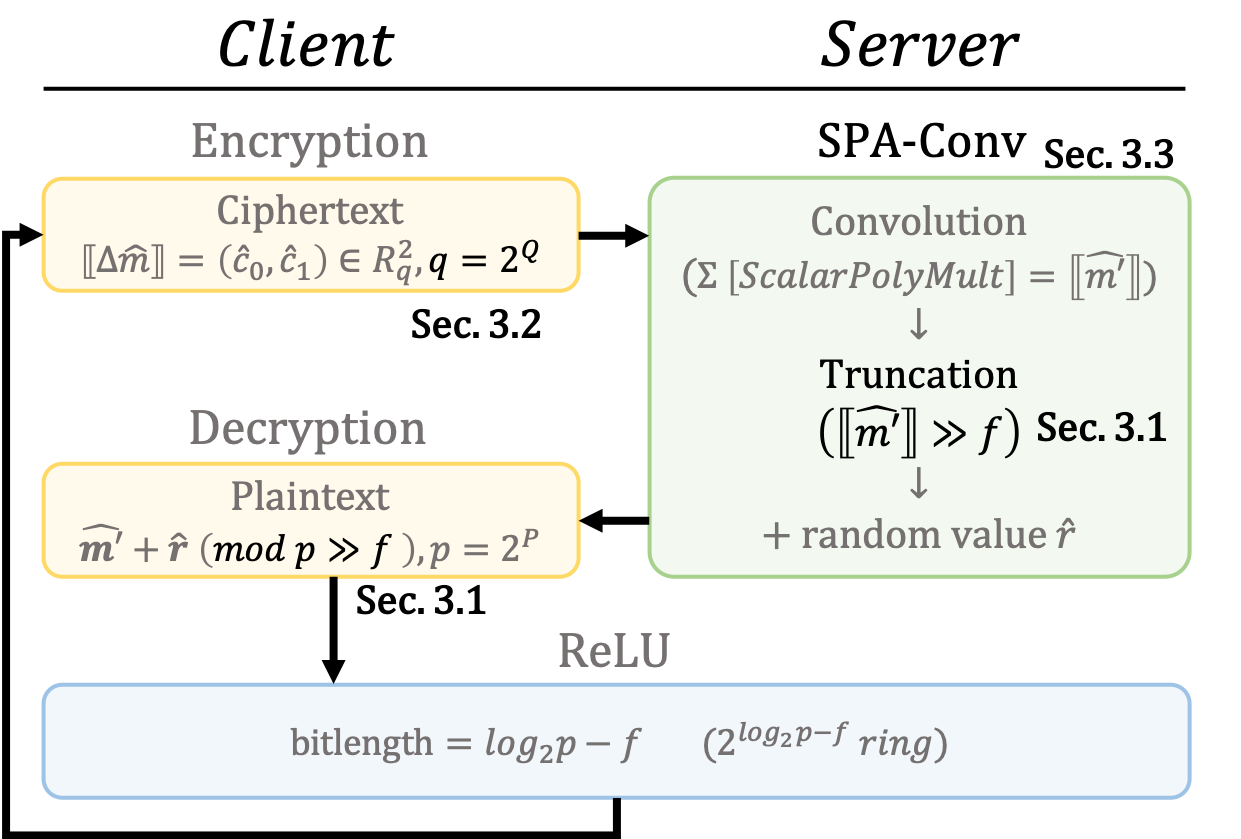}
\vspace{0mm}
\captionof{figure}{Overview of Jaguar.}
\label{fig:jaguar_overview}
\end{minipage}
\vspace{-2mm}

\subsection{Convolution Protocol}
\label{sec:conv_protocol}

Figure~\ref{fig:jaguar_conv_protocol} gives the secure convolution protocol:
the local \textsc{SPA-Conv} kernel wrapped by client encryption before the linear layer and grouped HE-to-share export afterward.

\begin{figure}[t]
\centering
\footnotesize
\setlength{\fboxsep}{3.5pt}
\fbox{
\begin{minipage}{0.97\linewidth}

\textbf{Inputs/outputs.}
($\mathcal{T}$: input tensor, $\mathcal{W}$: weight tensor.)
Client holds $(\langle \mathcal{T}\rangle_C,\mathsf{sk})$; server holds
$(\langle \mathcal{T}\rangle_S,\mathcal{W})$, with
$\langle \mathcal{T}\rangle_C+\langle \mathcal{T}\rangle_S=\mathcal{T}$ over
$\mathbb{Z}_{2^\ell}$.
Output:
$(\langle \mathcal{T}'\rangle_C,\langle \mathcal{T}'\rangle_S)$ for
$\mathcal{T}'=\mathrm{Conv2D}(\mathcal{T},\mathcal{W};s_{\mathrm{stride}})\gg f$.

\textbf{Public.}
$pp=(N,Q,P,K,s_{\mathrm{stride}},p_{\mathrm{pad}})$ and maps
$\Lambda$ and $\Phi$, where $\Lambda$ defines input packing and
$\Phi(o,\xi)=(\gamma,\kappa_{o,\xi})$ identifies the source ciphertext and
coefficient.

\vspace{-0.3em}
\begin{enumerate}
\setlength{\itemsep}{0.05em}
\setlength{\parskip}{0pt}
\setlength{\parsep}{0pt}
\setlength{\topsep}{0.1em}
\setlength{\partopsep}{0pt}
\renewcommand{\labelenumi}{\arabic{enumi}:}

\item \textbf{Client encryption.}
Encode $\langle\mathcal{T}\rangle_C$ by $\Lambda$, encrypt the input blocks, and
send $\{\mathsf{CT}^C_i\}$ to the server.

\item \textbf{Server-side \textsc{SPA-Conv}.}
Evaluate \textsc{SPA-Conv} on $\{\mathsf{CT}^C_i\}$ with plaintext filters.
Inside \textsc{SPA-Conv}, the server adds the server input-share contribution from
$\langle\mathcal{T}\rangle_S$ and $\mathcal{W}$ at pre-truncation precision,
then applies ciphertext-side truncation and power-of-two reduction.
This produces target-precision ciphertexts
$\{\mathsf{CT}^{\mathrm{tr}}_j\}$ encrypting
$\mathrm{Conv2D}(\mathcal{T},\mathcal{W};s_{\mathrm{stride}})\gg f$.
For the first layer, $\langle\mathcal{T}\rangle_S=0$.

\item \textbf{Random-share masking and grouped export.}
For each logical output $(o,\xi)$, sample a output random mask
$r_{o,\xi}\in\mathbb{Z}_{2^{P-f}}$ and homomorphically mask the corresponding
exported constant term so that the client decrypts
$\mathcal{T}'[o,\xi]-r_{o,\xi}$.
Group all valid output coefficients by source ciphertext. Each group contains
one shared linear component $\hat{a}_\gamma$, masked constants
$\{\widetilde{b}_{o,\xi}\}$, coefficient indices $\{\kappa_{o,\xi}\}$, and output
metadata. Send the groups to the client; the server keeps
$r_{o,\xi}$ as its output share.

\item \textbf{Client reconstruction.}
For each group, compute
$\hat{d}_\gamma(X)=\hat{a}_\gamma(X)\mathsf{sk}(X)$ once. For each stored
$\kappa_{o,\xi}$, combine $\hat{d}_\gamma[\kappa_{o,\xi}]$ with
$\widetilde{b}_{o,\xi}$ and decode by power-of-two rounding/shifting to obtain
$\langle\mathcal{T}'\rangle_C[o,\xi]
=\mathcal{T}'[o,\xi]-r_{o,\xi}$.
The server share is
$\langle\mathcal{T}'\rangle_S[o,\xi]=r_{o,\xi}$.
\end{enumerate}
\end{minipage}
}
\vspace{-0.6em}
\caption{Jaguar convolution protocol.}
\label{fig:jaguar_conv_protocol}
\vspace{-1.3em}
\end{figure}

Jaguar adopts Cheetah's~\cite{huang2022cheetah} LWE export compression: LWE ciphertexts extracted from the same RLWE ciphertext share their component, transmitted only once. 
The crucial difference from Cheetah is that exported values are \emph{already at target precision}, so the subsequent nonlinear protocol does not invoke a separate truncation.

\textbf{Fully-connected compatibility.}
Convolution accounts for >\,98\,\% of linear-layer latency in our evaluated CNNs, so isolating Jaguar's contribution to the convolution path keeps the comparison clean.
The final classifier is handled by a one-time conversion to the Cheetah FC backend~\cite{opencheetah,huang2022cheetah};
Jaguar \emph{does not} yet provide a native power-of-two FC/MatMul backend, as dense MatMul lacks the spatial-shift structure exploited by SPA-CONV.
We discuss extensions, including sparse/pruned MatMul, in Appendix~\ref{app:limitation}.


\subsection{Nonlinear Protocols}

Jaguar uses the same ring-based 2PC primitives as prior hybrid systems for nonlinearities~\cite{rathee2020cryptflow2,huang2022cheetah}; the difference is the \emph{representation} of their inputs. Because the HE linear path applies exact truncation, the nonlinear protocol receives target-precision shares.
For ReLU, $\mathrm{ReLU}(x)\gg f=\mathrm{ReLU}(x\gg f)$, 
so Jaguar invokes the existing ReLU primitive directly on shares of $x\gg f$.
Table~\ref{tab:nonlinear_comm} quantifies the saving: (a) the post-ReLU truncation protocol---$13\ell$ to $\lambda(\ell+f+2)+19\ell+14f$ communication bits in prior systems---is \emph{eliminated}; 
(b) ReLU communication itself shrinks because the input bitwidth changes from $\ell$ to
$\ell-f$, reducing both comparison width and share payload. 
For MobileNetV2~\cite{sandler2018mobilenetv2}, ReLU6 uses the same target-precision interface with the clipping threshold at the target scale. 

\begin{table}[t]
\centering
\footnotesize
\setlength{\tabcolsep}{3pt}
\renewcommand{\arraystretch}{0.95}
\caption{
Communication costs of (a) truncation and (b) ReLU.
$\ell$: bitwidth of the pre-truncation shared value. $f$: truncation amount. $b_r$: Millionaire radix block size. $\lambda$: security parameter.
}
\label{tab:nonlinear_comm}
\vspace{1mm}

\begin{minipage}[t]{0.49\linewidth}
\centering
\textbf{(a) Truncation}
\vspace{0.5mm}

\begin{adjustbox}{max width=\linewidth}
\begin{tabular}{lccc}
\toprule
Protocol & Small err. & Big err. & Comm. bits \\
\midrule
CrypTFlow2 & $\times$ & $\times$ & $\lambda(\ell+f+2)+19\ell+14f$ \\
Cheetah-F & $\times$ & $\times$ & $16\ell+11f$ \\
Cheetah & $\circ$ & $\times$ & $13\ell$ \\
\textbf{Jaguar} & $\times$ & $\times$ & $\mathbf{0}$ \\
\bottomrule
\end{tabular}
\end{adjustbox}

\vspace{0.5mm}
{\scriptsize Cheetah-F denotes faithful truncation.}
\end{minipage}
\hfill
\begin{minipage}[t]{0.49\linewidth}
\centering
\textbf{(b) ReLU}
\vspace{0.5mm}

\begin{adjustbox}{max width=\linewidth}
\begin{tabular}{lccc}
\toprule
Scheme & Input & Cmp. width & ReLU comm. bits \\
\midrule
Cheetah
& $\ell$
& $b_r\lfloor(\ell-\frac{3f}{2}-1)/b_r\rfloor$
& \makecell[c]{$<11b_r\lfloor(\ell-\frac{3f}{2}-1)/b_r\rfloor$\\$+2\ell+1$} \\

\textbf{Jaguar}
& $\ell-f$
& $b_r\lfloor(\ell-2f-1)/b_r\rfloor$
& \makecell[c]{$<11b_r\lfloor(\ell-2f-1)/b_r\rfloor$\\$+2(\ell-f)+1$} \\
\bottomrule
\end{tabular}
\end{adjustbox}
\end{minipage}

\vspace{-1em}
\end{table}

\section{Experiments}
\label{sec:experiments}

\subsection{Experiment Setup}

\textbf{HE parameters.}
BFV~\cite{fan2012somewhat} with $N=2048$, $q=2^{54}$, $p=2^{30}$, $\lambda=128$, $f=9$. 
Parameter selection for preserving fixed-point accuracy is detailed in Appendix~\ref{app:param_selection}.

\textbf{Models and dataset.}
ImageNet~\cite{russakovsky2015imagenet} with pretrained torchvision~\cite{torchvision} ResNet-18~\cite{he2016resnet}, ResNet-50~\cite{he2016resnet}, and MobileNetV2~\cite{sandler2018mobilenetv2}. 

\textbf{Implementation.}
Jaguar is built on SEAL~\cite{sealcrypto}. Experiments run on dual Intel Xeon Gold 6250 CPUs at 3.90\,GHz and 128\,GB RAM. 
Linux \texttt{tc} emulates LAN (384\,MB/s, 0.3\,ms RTT) and WAN (62.5\,MB/s, 20\,ms RTT). We compare against OpenCheetah~\cite{opencheetah} and Rhombus~\cite{rhombus_impl} using their public implementations with default HE parameters and matched fixed-point scale and 2PC share bitwidth.


\subsection{Microbenchmarks}

\begin{table*}[h]
\centering
\small
\setlength{\tabcolsep}{4.5pt}
\renewcommand{\arraystretch}{1.12}
\caption{
Compute-only convolution microbenchmark with AVX disabled.
Speedup is Cheetah over Jaguar. (GM: geometric mean) 
Best/worst cases are shapes $(W,C_{\mathrm{in}},C_{\mathrm{out}},K)$, where $W$ is the square spatial width and $K$ is the kernel size; for depthwise, $C_{\mathrm{out}}=C_{\mathrm{in}}$.}
Full per-shape results in Appendix~\ref{app:full_conv_microbench}.

\label{tab:conv_microbench_avxoff}
\begin{tabular}{llccccc}
\toprule
Conv. type & Layout & \#Cases & GM speedup & Range & Best case & Worst case \\
\midrule
\multirow{2}{*}{Dense $3{\times}3$}
& Packed & 10 & 2.47$\times$ & 0.29--9.64$\times$ & $(32,128,128,3)$ & $(4,512,512,3)$ \\
& Split  & 4  & 17.81$\times$ & 12.81--37.75$\times$ & $(64,64,64,3)$ & $(224,64,64,3)$ \\
\midrule
\multirow{2}{*}{Pointwise $1{\times}1$}
& Packed & 9 & 11.82$\times$ & 3.04--46.73$\times$ & $(28,128,512,1)$ & $(7,320,1280,1)$ \\
& Split  & 4 & 10.76$\times$ & 9.91--12.99$\times$ & $(56,64,256,1)$ & $(112,32,16,1)$ \\
\midrule
\multirow{2}{*}{Depthwise $3{\times}3$}
& Packed & 6 & 11.65$\times$ & 9.64--13.04$\times$ & $(14,96,96,3)$ & $(14,576,576,3)$ \\
& Split  & 3 & 5.45$\times$ & 4.29--6.65$\times$ & $(56,24,24,3)$ & $(112,32,32,3)$ \\
\bottomrule
\end{tabular}

\end{table*}

\textbf{Convolution kernel.}
We benchmark representative convolution shapes from the evaluated models,
measuring server-side \textsc{SPA-Conv}
computation only (single thread, AVX disabled). 

The benefit depends on convolution type and packing amortization (Table~\ref{tab:conv_microbench_avxoff}). 
\textbf{Split dense} (17.81$\times$ GM) gains the most: the $3\times 3$ channel-mixing repeats across many spatial blocks, so the baseline pays NTT/PMult/reduction overhead each time, while \textsc{SPA-Conv} applies block-local shifted scalar accumulation. 
\textbf{Pointwise} (11.82$\times$/10.76$\times$ GM packed/split) is consistently favorable because $K=1$ removes spatial shifts, reducing \textsc{SPA-Conv} to channel-wise scalar accumulation. 
\textbf{Packed depthwise} (11.65$\times$ GM) benefits because each output channel uses only $K^2=9$ shifted terms instead of $C_{\mathrm{in}}K^2$, while the baseline still invokes a generic PMult-heavy path; 
\textbf{split depthwise} (5.45$\times$ GM) is smaller because more spatial blocks must be processed without cross-channel mixing to amortize the repeated block-local work. 
\textbf{Packed dense} (2.47$\times$ GM) shows the widest range: when the spatial footprint is large, only a few channels fit per ciphertext and Jaguar is much faster;
when the footprint is small, the baseline strongly amortizes polynomial multiplication across many packed channels, sometimes reversing the advantage. 
These below-1$\times$ cases do not dominate full-network convolution time.

\begin{table*}[!htp]
\centering
\begin{minipage}[t]{0.58\textwidth}
\vspace{0pt}
\textbf{Nonlinear stage.}
Table~\ref{tab:nonlinear_breakdown_rn50} measures the ReLU and truncation layers from ResNet-50. ReLU latency drops by 2.24$\times$ under LAN and 1.61$\times$ under WAN, with 1.83$\times$ lower communication, because the ReLU comparison runs at a smaller input bitwidth.
The truncation gain is larger: replacing the interactive 2PC truncation with local HE-side truncation gives 5.58$\times$ lower latency under LAN and 16.22$\times$ under WAN (the WAN gain is dominated by the eliminated round trip), while reducing truncation communication to zero. The combined ReLU+truncation stage improves by 2.32$\times$(LAN) / 1.80$\times$(WAN) in latency and 1.98$\times$ in communication.
\end{minipage}
\hfill
\begin{minipage}[t]{0.40\textwidth}
\vspace{0pt}
\centering
\scriptsize
\setlength{\tabcolsep}{3.5pt}
\renewcommand{\arraystretch}{1.10}
\caption{
Network-level nonlinear-stage breakdown for ResNet-50.
}
\label{tab:nonlinear_breakdown_rn50}
\begin{tabular}{ccccc}
\toprule
\multirow{2}{*}[-0.5ex]{Layer} & \multirow{2}{*}[-0.5ex]{Method}
& \multicolumn{2}{c}{Latency (s)} & \multirow{2}{*}[-0.5ex]{Comm. (MiB)} \\
\cmidrule(l){3-4}
& & LAN & WAN & \\
\midrule
\multirow{3}{*}{ReLU}
& Cheetah & 17.70 & 24.20 & 271.69 \\
& \textbf{Jaguar} & \textbf{7.89} & \textbf{15.02} & \textbf{148.61} \\
\cmidrule(){3-4}\cmidrule(l){5-5}
& & 2.24$\times$ & 1.61$\times$ & 1.83$\times$ \\
\midrule
\multirow{3}{*}{Truncation}
& Cheetah & 1.08 & 3.21 & 22.91 \\
& \textbf{Jaguar} & \textbf{0.19} & \textbf{0.20} & \textbf{0.00} \\
\cmidrule(){3-4}\cmidrule(l){5-5}
& & 5.58$\times$ & 16.22$\times$ & Eliminated \\
\bottomrule
\end{tabular}

\end{minipage}
\end{table*}

\subsection{End-to-End Inference}
\label{subsec:end2end}
\begin{table*}[h]
\centering
\footnotesize
\setlength{\tabcolsep}{7pt}
\renewcommand{\arraystretch}{0.98}
\setlength{\aboverulesep}{0.25ex}
\setlength{\belowrulesep}{0.25ex}
\caption{
End-to-end latency and communication with AVX disabled.
Speedup and communication reduction are computed relative to Cheetah.
(AVX-enabled results are in Appendix~\ref{app:avx}.)
}
\label{tab:e2e_avxoff}
\vspace{0mm}
\begin{tabular}{@{}cccccccc@{}}
\toprule
\multirow{2}{*}[-0.5ex]{Model} &
\multirow{2}{*}[-0.5ex]{Protocol} &
\multicolumn{2}{c}{LAN} &
\multicolumn{2}{c}{WAN} &
\multicolumn{2}{c}{Communication} \\
\cmidrule(lr){3-4}
\cmidrule(lr){5-6}
\cmidrule(l){7-8}
& & Latency (s) & Speedup & Latency (s) & Speedup & Comm. (MiB) & Reduction \\
\midrule

\multirow{3}{*}{ResNet18}
& Cheetah & 30.30 & 1.00$\times$ & 36.94 & 1.00$\times$ & 366.94 & 1.00$\times$ \\
& Rhombus & 29.04 & 1.04$\times$ & 36.24 & 1.02$\times$ & 344.01 & 1.07$\times$ \\
& \textbf{Jaguar} & \textbf{8.64} & \textbf{3.51}$\times$ & \textbf{15.94} & \textbf{2.32}$\times$ & \textbf{315.02} & \textbf{1.16}$\times$ \\
\midrule

\multirow{3}{*}{ResNet50}
& Cheetah & 120.82 & 1.00$\times$ & 142.72 & 1.00$\times$ & 1335.63 & 1.00$\times$ \\
& Rhombus & 96.77 & 1.25$\times$ & 111.44 & 1.28$\times$ & \textbf{750.22} & \textbf{1.78}$\times$ \\
& \textbf{Jaguar} & \textbf{32.44} & \textbf{3.72}$\times$ & \textbf{51.50} & \textbf{2.77}$\times$ & 962.00 & 1.39$\times$ \\
\midrule

\multirow{3}{*}{MobileNetV2}
& Cheetah & 46.99 & 1.00$\times$ & 67.25 & 1.00$\times$ & 1042.62 & 1.00$\times$ \\
& Rhombus & 58.98 & 0.80$\times$ & 77.19 & 0.87$\times$ & 836.81 & 1.25$\times$ \\
& \textbf{Jaguar} & \textbf{18.41} & \textbf{2.55}$\times$ & \textbf{32.42} & \textbf{2.07}$\times$ & \textbf{592.50} & \textbf{1.76}$\times$ \\
\bottomrule
\end{tabular}
\end{table*}

\begin{figure}[h]
    \centering
    \includegraphics[width=0.95\columnwidth]{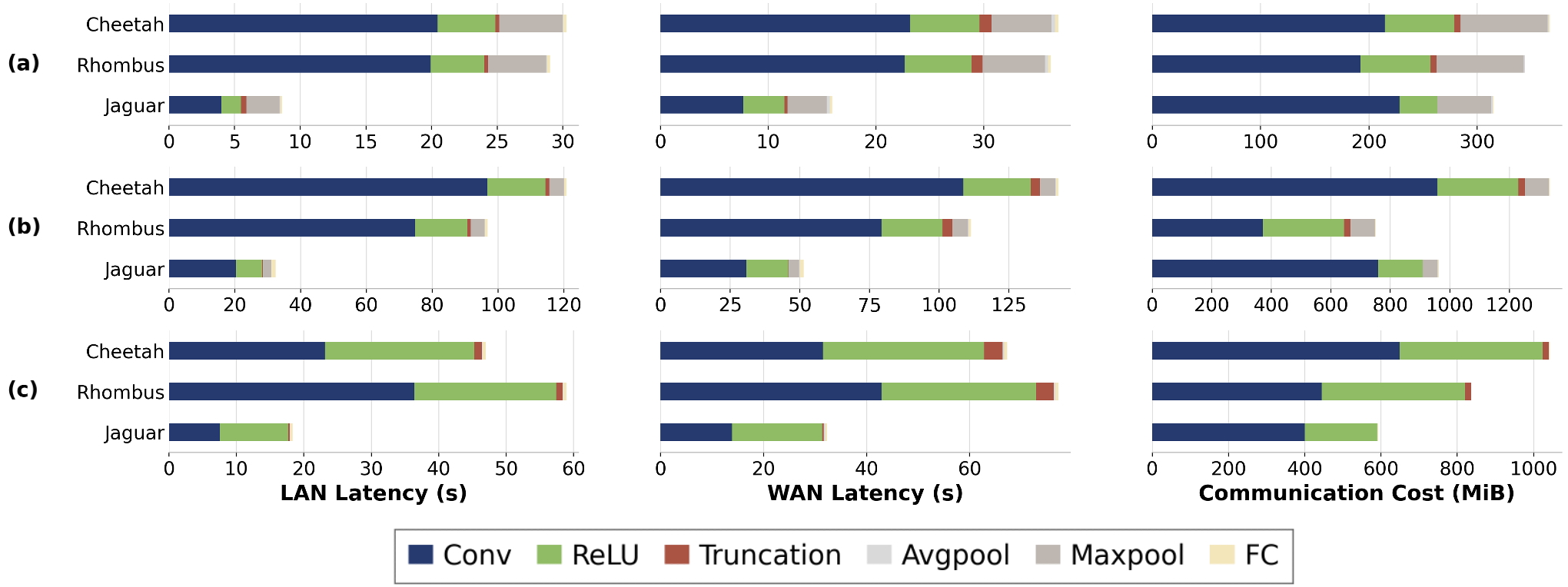}
    \caption{
    End-to-end latency and communication breakdown across ImageNet-scale CNNs:
    (a) ResNet-18, (b) ResNet-50, and (c) MobileNetV2.
    }
    \label{fig:e2e_breakdown_grid}
    \vspace{-1em}
\end{figure}

We evaluate end-to-end secure inference using four threads (Table~\ref{tab:e2e_avxoff}).
Latency includes local protocol execution and network transfer; communication counts all messages exchanged in one inference.

Jaguar achieves the lowest end-to-end latency on every model:
3.51$\times$/3.72$\times$/2.55$\times$ over Cheetah on LAN; 2.32$\times$/2.77$\times$/2.07$\times$ on WAN; 
communication 1.16--1.78$\times$ lower than Cheetah.
The improvement holds across CNNs with different mixtures of dense, pointwise, and depthwise convolutions.

\textbf{ResNet-50 communication vs. Rhombus.}
Rhombus reports 750\,MiB on ResNet-50, lower than Jaguar's 962\,MiB.
Rhombus is \emph{specifically optimized} for the pointwise-convolution and MVM components that dominate ResNet-50's communication;
its integration replaces these in Cheetah but leaves dense 3$\times$3 convolution and the nonlinear stage untouched.
This is why Rhombus's communication advantage on ResNet-50 is largest in absolute terms but \textbf{does not translate proportionally into latency}:
Rhombus is 1.25$\times$ faster than Cheetah on LAN; Jaguar is 3.72$\times$.
Jaguar trades a portion of communication on ResNet-50 for substantially lower compute across the \emph{entire} convolution path and a fundamentally cheaper nonlinear stage.
On more diverse layer mixes (ResNet-18, MobileNetV2), Jaguar achieves both lower latency \emph{and} lower communication than Rhombus.

\textbf{End-to-end breakdown.}
Across all models, the dominant latency reduction comes from the linear convolution path,
with the relative contribution differing by architecture (Figure~\ref{fig:e2e_breakdown_grid}). 
The nonlinear side adds 
reduced ReLU input bitwidth and eliminated post-ReLU truncation,
improving both compute-dominated LAN and communication-sensitive WAN latency.

\textbf{AVX-512.}
Jaguar's power-of-two arithmetic is well-suited to AVX-512~\cite{intel_avx512_overview,intel_intrinsics_guide}, since modular reduction and ciphertext-side truncation map to vectorized bit masking and arithmetic shifts. With AVX-512, Jaguar's end-to-end LAN latencies reach 7.90\,s/25.55\,s/16.83\,s on ResNet-18/ResNet-50/MobileNetV2, respectively (Appendix~\ref{app:avx}).

\section{Conclusion}
\label{sec:conclusion}

We presented Jaguar, a hybrid HE/2PC private CNN inference system built on a power-of-two ciphertext ring. 
The single design choice $q=2^Q, p=2^P$ unlocks three contributions that prior work could not realize on the conventional NTT-prime backend:
\textsc{SPA-Conv}, which evaluates CNN convolution as scalar--polynomial accumulation without polynomial multiplication, NTT, or modular prime reduction in the linear path;
exact pre-ReLU ciphertext-side truncation, which removes the post-ReLU truncation protocol and reduces ReLU input bitwidth;
and NTT-assisted decryption, which recovers $O(N\log N)$ client reconstruction via an auxiliary NTT prime without disturbing the protocol substrate.
Experiments on ResNet-18, ResNet-50, and MobileNetV2 show 2.07--3.72$\times$ lower end-to-end latency than Cheetah and 2.16--3.36$\times$ lower than Rhombus under both LAN and WAN settings. 
Jaguar shows that \emph{changing the HE arithmetic regime itself}---rather than only optimizing packing or communication on top of the conventional backend---can unlock substantial gains for practical private CNN inference.



\bibliographystyle{unsrtnat}
\bibliography{ref}

@article{roh2024flash,
  title={{Flash}: A Hybrid Private Inference Protocol for Deep {CNNs} with High Accuracy and Low Latency on CPU},
  author={Roh, Hyeri and Yeo, Jinsu and Ko, Yeongil and Wei, Gu-Yeon and Brooks, David and Choi, Woo-Seok},
  journal={arXiv preprint arXiv:2401.16732},
  year={2024}
}

@inproceedings{huang2022cheetah,
  title={Cheetah: Lean and fast secure $\{$Two-Party$\}$ deep neural network inference},
  author={Huang, Zhicong and Lu, Wen-jie and Hong, Cheng and Ding, Jiansheng},
  booktitle={31st USENIX Security Symposium (USENIX Security 22)},
  pages={809--826},
  year={2022}
}

@inproceedings{rathee2020cryptflow2,
  title={Cryptflow2: Practical 2-party secure inference},
  author={Rathee, Deevashwer and Rathee, Mayank and Kumar, Nishant and Chandran, Nishanth and Gupta, Divya and Rastogi, Aseem and Sharma, Rahul},
  booktitle={Proceedings of the 2020 ACM SIGSAC Conference on Computer and Communications Security},
  pages={325--342},
  year={2020}
}

@article{choi2022impala,
  title={Impala: Low-latency, communication-efficient private deep learning inference},
  author={Choi, Woo-Seok and Reagen, Brandon and Wei, Gu-Yeon and Brooks, David},
  journal={arXiv preprint arXiv:2205.06437},
  year={2022}
}

@inproceedings{roh2024hyena,
  title={Hyena: Optimizing Homomorphically Encrypted Convolution for Private CNN Inference},
  author={Roh, Hyeri and Choi, Woo-Seok},
  booktitle={Proceedings of the 43rd IEEE/ACM International Conference on Computer-Aided Design},
  pages={1--9},
  year={2024}
}

@inproceedings{juvekar2018gazelle,
  title={$\{$GAZELLE$\}$: A low latency framework for secure neural network inference},
  author={Juvekar, Chiraag and Vaikuntanathan, Vinod and Chandrakasan, Anantha},
  booktitle={27th USENIX Security symposium (USENIX Security 18)},
  pages={1651--1669},
  year={2018}
}

@article{fan2012somewhat,
  title={Somewhat practical fully homomorphic encryption},
  author={Fan, Junfeng and Vercauteren, Frederik},
  journal={Cryptology ePrint Archive},
  year={2012},
  url = {https://eprint.iacr.org/2012/144}
}

@inproceedings{he2024Rhombus,
    author = {He, Jiaxing and Yang, Kang and Tang, Guofeng and Huang, Zhangjie and Lin, Li and Wei, Changzheng and Yan, Ying and Wang, Wei},
    title = {Rhombus: Fast Homomorphic Matrix-Vector Multiplication for Secure Two-Party Inference},
    booktitle = {Proceedings of the 2024 on ACM SIGSAC Conference on Computer and Communications Security (CCS)},
    pages = {2490––2504},
    year = {2024}
}

@inproceedings{xu2023falcon,
  author    = {Tianshi Xu and Meng Li and Runsheng Wang and Ru Huang},
  title     = {Falcon: Accelerating Homomorphically Encrypted Convolutions for Efficient Private Mobile Network Inference},
  booktitle = {2023 IEEE/ACM International Conference on Computer-Aided Design ({ICCAD})},
  year      = {2023},
  pages     = {1--9},
  doi       = {10.1109/ICCAD57390.2023.10323672}
}

@inproceedings{hao2022iron,
  author    = {Meng Hao and Hongwei Li and Hanxiao Chen and Pengzhi Xing and Guowen Xu and Tianwei Zhang},
  title     = {Iron: Private Inference on Transformers},
  booktitle = {Advances in Neural Information Processing Systems 35},
  year      = {2022},
  pages     = {15718--15731}
}

@inproceedings{pang2024bolt,
  author    = {Qi Pang and Jinhao Zhu and Helen M{\"o}llering and Wenting Zheng and Thomas Schneider},
  title     = {{BOLT}: Privacy-Preserving, Accurate and Efficient Inference for Transformers},
  booktitle = {2024 IEEE Symposium on Security and Privacy ({SP})},
  year      = {2024},
  pages     = {4753--4771},
  doi       = {10.1109/SP54263.2024.00130}
}

@inproceedings{xu2024privcirnet,
  author    = {Tianshi Xu and Lemeng Wu and Runsheng Wang and Meng Li},
  title     = {PrivCirNet: Efficient Private Inference via Block Circulant Transformation},
  booktitle = {Advances in Neural Information Processing Systems 37},
  year      = {2024}
}

@inproceedings{mohassel2017secureml,
  title={SecureML: A System for Scalable Privacy-Preserving Machine Learning},
  author={Mohassel, Payman and Zhang, Yupeng},
  booktitle={2017 IEEE Symposium on Security and Privacy (SP)},
  pages={19--38},
  year={2017},
  organization={IEEE},
  doi={10.1109/SP.2017.12}
}

@inproceedings{liu2017minionn,
  title={Oblivious Neural Network Predictions via MiniONN Transformations},
  author={Liu, Jian and Juuti, Mika and Lu, Yao and Asokan, N.},
  booktitle={Proceedings of the 2017 ACM SIGSAC Conference on Computer and Communications Security},
  pages={619--631},
  year={2017},
  doi={10.1145/3133956.3134056}
}

@inproceedings{mishra2020delphi,
  title={DELPHI: A Cryptographic Inference Service for Neural Networks},
  author={Mishra, Pratyush and Lehmkuhl, Ryan and Srinivasan, Akshayaram and Zheng, Wenting and Popa, Raluca Ada},
  booktitle={29th USENIX Security Symposium (USENIX Security 20)},
  pages={2505--2522},
  year={2020}
}

@inproceedings{giladbachrach2016cryptonets,
  title={CryptoNets: Applying Neural Networks to Encrypted Data with High Throughput and Accuracy},
  author={Gilad-Bachrach, Ran and Dowlin, Nathan and Laine, Kim and Lauter, Kristin and Naehrig, Michael and Wernsing, John},
  booktitle={Proceedings of the 33rd International Conference on Machine Learning},
  series={Proceedings of Machine Learning Research},
  volume={48},
  pages={201--210},
  year={2016},
  publisher={PMLR}
}

@inproceedings{brutzkus2019lola,
  title={Low Latency Privacy Preserving Inference},
  author={Brutzkus, Alon and Elisha, Oren and Gilad-Bachrach, Ran},
  booktitle={Proceedings of the 36th International Conference on Machine Learning},
  series={Proceedings of Machine Learning Research},
  volume={97},
  pages={812--821},
  year={2019},
  publisher={PMLR}
}

@inproceedings{lee2022lowcomplexity,
  title={Low-Complexity Deep Convolutional Neural Networks on Fully Homomorphic Encryption Using Multiplexed Parallel Convolutions},
  author={Lee, Eunsang and Lee, Joon-Woo and Lee, Junghyun and Kim, Young-Sik and Kim, Yongjune and No, Jong-Seon and Choi, Woosuk},
  booktitle={Proceedings of the 39th International Conference on Machine Learning},
  series={Proceedings of Machine Learning Research},
  volume={162},
  pages={12403--12422},
  year={2022},
  publisher={PMLR}
}

@inproceedings{ju2024neujeans,
  title={NeuJeans: Private Neural Network Inference with Joint Optimization of Convolution and FHE Bootstrapping},
  author={Ju, Jae Hyung and Park, Jaiyoung and Kim, Jongmin and Kang, Minsik and Kim, Donghwan and Cheon, Jung Hee and Ahn, Jung Ho},
  booktitle={Proceedings of the 2024 ACM SIGSAC Conference on Computer and Communications Security},
  pages={4361--4375},
  year={2024},
  doi={10.1145/3658644.3690375}
}

@misc{stoian2023tfhe,
  title={Deep Neural Networks for Encrypted Inference with TFHE},
  author={Stoian, Alexandru and Fr{\'e}ry, Jordan and Bredehoft, Roman and Montero, Luis and Kherfallah, Celia and Chevallier-Mames, Beno{\^i}t},
  howpublished={Cryptology ePrint Archive, Paper 2023/257},
  year={2023},
  url={https://eprint.iacr.org/2023/257}
}

@article{gupta2022llama,
  title={{LLAMA}: A Low Latency Math Library for Secure Inference},
  author={Gupta, Kanav and Kumaraswamy, Deepak and Chandran, Nishanth and Gupta, Divya},
  journal={Proceedings on Privacy Enhancing Technologies},
  volume={2022},
  number={4},
  pages={274--294},
  year={2022}
}

@inproceedings{he2016resnet,
  title={Deep Residual Learning for Image Recognition},
  author={He, Kaiming and Zhang, Xiangyu and Ren, Shaoqing and Sun, Jian},
  booktitle={Proceedings of the IEEE Conference on Computer Vision and Pattern Recognition},
  pages={770--778},
  year={2016},
  doi={10.1109/CVPR.2016.90}
}

@inproceedings{sandler2018mobilenetv2,
  title={{MobileNetV2}: Inverted Residuals and Linear Bottlenecks},
  author={Sandler, Mark and Howard, Andrew and Zhu, Menglong and Zhmoginov, Andrey and Chen, Liang-Chieh},
  booktitle={Proceedings of the IEEE Conference on Computer Vision and Pattern Recognition},
  pages={4510--4520},
  year={2018},
  doi={10.1109/CVPR.2018.00474}
}

@article{russakovsky2015imagenet,
  title={ImageNet Large Scale Visual Recognition Challenge},
  author={Russakovsky, Olga and Deng, Jia and Su, Hao and Krause, Jonathan and Satheesh, Sanjeev and Ma, Sean and Huang, Zhiheng and Karpathy, Andrej and Khosla, Aditya and Bernstein, Michael and Berg, Alexander C. and Li, Fei-Fei},
  journal={International Journal of Computer Vision},
  volume={115},
  number={3},
  pages={211--252},
  year={2015},
  doi={10.1007/s11263-015-0816-y}
}

@article{esteva2017dermatologist,
  title={Dermatologist-Level Classification of Skin Cancer with Deep Neural Networks},
  author={Esteva, Andre and Kuprel, Brett and Novoa, Roberto A. and Ko, Justin and Swetter, Susan M. and Blau, Helen M. and Thrun, Sebastian},
  journal={Nature},
  volume={542},
  number={7639},
  pages={115--118},
  year={2017},
  doi={10.1038/nature21056}
}

@article{kaissis2021end,
  title={End-to-End Privacy Preserving Deep Learning on Multi-Institutional Medical Imaging},
  author={Kaissis, Georgios A. and Ziller, Alexander and Passerat-Palmbach, Jonathan and Ryffel, Th{\'e}o and Usynin, Dmitrii and Trask, Andrew and Lima, Ion{\'e}sio and Mancuso, Jason and Jungmann, Friederike and Steinborn, Marc-Matthias and Braren, Rickmer and Makowski, Marcus and Rueckert, Daniel and others},
  journal={Nature Machine Intelligence},
  volume={3},
  number={6},
  pages={473--484},
  year={2021},
  doi={10.1038/s42256-021-00337-8}
}

@inproceedings{lin2020mcunet,
  title={{MCUNet}: Tiny Deep Learning on {IoT} Devices},
  author={Lin, Ji and Chen, Wei-Ming and Lin, Yujun and Gan, Chuang and Han, Song},
  booktitle={Advances in Neural Information Processing Systems},
  volume={33},
  pages={11711--11722},
  year={2020}
}

@book{warden2019tinyml,
  title={{TinyML}: Machine Learning with TensorFlow Lite on Arduino and Ultra-Low-Power Microcontrollers},
  author={Warden, Pete and Situnayake, Daniel},
  publisher={O'Reilly Media},
  year={2019},
  isbn={9781492052043}
}

@misc{opencheetah,
  title        = {{OpenCheetah}: Proof-of-Concept Implementation for {Cheetah}},
  author       = {{Alibaba Gemini Lab}},
  howpublished = {\url{https://github.com/Alibaba-Gemini-Lab/OpenCheetah}},
  year         = {2022},
  note         = {Accessed: 2026-04-26}
}

@misc{homencstandard,
  title        = {Homomorphic Encryption Standard},
  author       = {{HomomorphicEncryption.org Standardization Consortium}},
  howpublished = {Version 1.1},
  year         = {2024},
  url          = {https://homomorphicencryption.org/wp-content/uploads/2024/08/Homomorphic-Encryption-Standard-v1.1.pdf}
}

@manual{sealmanual,
  title        = {Simple Encrypted Arithmetic Library 2.3.1},
  author       = {Laine, Kim},
  organization = {Microsoft Research},
  year         = {2017},
  url          = {https://www.microsoft.com/en-us/research/wp-content/uploads/2017/11/sealmanual-2-3-1.pdf}
}

@inproceedings{lyubashevsky2013toolkit,
  title     = {A Toolkit for Ring-{LWE} Cryptography},
  author    = {Lyubashevsky, Vadim and Peikert, Chris and Regev, Oded},
  booktitle = {Advances in Cryptology -- EUROCRYPT 2013},
  series    = {Lecture Notes in Computer Science},
  volume    = {7881},
  pages     = {35--54},
  year      = {2013},
  publisher = {Springer},
  doi       = {10.1007/978-3-642-38348-9_3}
}

@inproceedings{vaswani2017attention,
  title     = {Attention Is All You Need},
  author    = {Vaswani, Ashish and Shazeer, Noam and Parmar, Niki and Uszkoreit, Jakob and Jones, Llion and Gomez, Aidan N. and Kaiser, Lukasz and Polosukhin, Illia},
  booktitle = {Advances in Neural Information Processing Systems},
  volume    = {30},
  year      = {2017}
}

@inproceedings{devlin2019bert,
  title     = {{BERT}: Pre-training of Deep Bidirectional Transformers for Language Understanding},
  author    = {Devlin, Jacob and Chang, Ming-Wei and Lee, Kenton and Toutanova, Kristina},
  booktitle = {Proceedings of the 2019 Conference of the North American Chapter of the Association for Computational Linguistics: Human Language Technologies},
  pages     = {4171--4186},
  year      = {2019},
  doi       = {10.18653/v1/N19-1423}
}

@inproceedings{brown2020language,
  title     = {Language Models are Few-Shot Learners},
  author    = {Brown, Tom B. and Mann, Benjamin and Ryder, Nick and Subbiah, Melanie and Kaplan, Jared and Dhariwal, Prafulla and Neelakantan, Arvind and Shyam, Pranav and Sastry, Girish and Askell, Amanda and Agarwal, Sandhini and Herbert-Voss, Ariel and Krueger, Gretchen and Henighan, Tom and Child, Rewon and Ramesh, Aditya and Ziegler, Daniel M. and Wu, Jeffrey and Winter, Clemens and Hesse, Christopher and Chen, Mark and Sigler, Eric and Litwin, Mateusz and Gray, Scott and Chess, Benjamin and Clark, Jack and Berner, Christopher and McCandlish, Sam and Radford, Alec and Sutskever, Ilya and Amodei, Dario},
  booktitle = {Advances in Neural Information Processing Systems},
  volume    = {33},
  pages     = {1877--1901},
  year      = {2020}
}

@inproceedings{dosovitskiy2021image,
  title     = {An Image is Worth 16x16 Words: Transformers for Image Recognition at Scale},
  author    = {Dosovitskiy, Alexey and Beyer, Lucas and Kolesnikov, Alexander and Weissenborn, Dirk and Zhai, Xiaohua and Unterthiner, Thomas and Dehghani, Mostafa and Minderer, Matthias and Heigold, Georg and Gelly, Sylvain and Uszkoreit, Jakob and Houlsby, Neil},
  booktitle = {International Conference on Learning Representations},
  year      = {2021}
}

@misc{intel_avx512_overview,
  title        = {Intel Advanced Vector Extensions 512 ({Intel AVX-512}) Overview},
  author       = {{Intel Corporation}},
  howpublished = {\url{https://www.intel.com/content/www/us/en/architecture-and-technology/avx-512-overview.html}},
  year         = {2024},
  note         = {Accessed: 2026-04-27}
}

@misc{intel_intrinsics_guide,
  title        = {Intel Intrinsics Guide},
  author       = {{Intel Corporation}},
  howpublished = {\url{https://www.intel.com/content/www/us/en/docs/intrinsics-guide/index.html}},
  year         = {2024},
  note         = {Version 3.6.9, accessed: 2026-04-27}
}

@misc{sealcrypto,
  title        = {{M}icrosoft {SEAL} (release 4.1)},
  howpublished = {\url{https://github.com/Microsoft/SEAL}},
  month        = jan,
  year         = {2023},
  note         = {Microsoft Research, Redmond, WA.},
  key          = {SEAL}
}

@misc{rhombus_impl,
  title        = {{RhombusEnd2End}: Public Implementation of Rhombus},
  author       = {He, Jiaxing},
  year         = {2025},
  howpublished ={\url{https://github.com/2646jx/RhombusEnd2End}},
  note         = {MIT License. Accessed: 2026-04-30}
}

@article{pollard1971fast,
  title={The Fast Fourier Transform in a Finite Field},
  author={Pollard, J. M.},
  journal={Mathematics of Computation},
  volume={25},
  number={114},
  pages={365--374},
  year={1971}
}

@misc{torchvision,
  title        = {Torchvision: PyTorch's Computer Vision Library},
  author       = {{PyTorch Contributors}},
  howpublished = {\url{https://github.com/pytorch/vision}},
  year         = {2024},
  note         = {Accessed: 2026-04-30}
}

\newpage
\appendix
\section{Notations}
\label{app:notations}

\begingroup
\small
\setlength{\LTpre}{2pt}
\setlength{\LTpost}{2pt}
\setlength{\tabcolsep}{2.5pt}
\renewcommand{\arraystretch}{1.06}

\noindent\textbf{Conventions.}
Ring elements are italic polynomials, e.g., $\hat{x}(X)\in R_q$ or short for $\hat{x}\in R_q$, and $\hat{x}[i]$ denotes the $i$-th coefficient. Lower-case letters with "hat" symbols denote polynomials, bold lower-case symbols denote vectors/lists, bold upper-case symbols denote matrices, and calligraphic symbols denote tensors.

\vspace{1mm}
\begin{longtable}{@{}L{0.14\textwidth}L{0.31\textwidth}@{\hspace{0.7em}\vrule width 0.35pt\hspace{0.7em}}L{0.14\textwidth}L{0.31\textwidth}@{}}
\toprule
Symbol & Meaning & Symbol & Meaning \\
\midrule
\multicolumn{4}{@{}l}{\textbf{General and cryptographic parameters}} \\
\midrule
$[n]$ & $\{0,\ldots,n-1\}$ for an integer $n$.
&
$N$ & Ring degree / number of polynomial coefficients. \\

$R_q$ & $\mathbb{Z}_q[X]/(X^N+1)$.
&
$q=2^Q$ & Ciphertext modulus and bitwidth. \\

$p=2^P$ & Plaintext modulus and bitwidth.
&
$\Delta$ & BFV scale, $\Delta=q/p=2^{Q-P}$. \\


$q_{\mathrm{ntt}}$ & Auxiliary NTT-friendly prime for client-side reconstruction.
&
$\lambda$ & Security parameter. \\

$\ell$ & Bitwidth of the additive-share ring for the pre-truncation value; i.e. shares are over $\mathbb{Z}_{2^\ell}$.
&
$f$ & Fixed-point fractional bits; main truncation amount. \\

$b_r$ & Millionaire comparison radix block size. 
&
$L$ & Number of RNS limbs in prime-modulus baselines. \\ 

$\Gamma_{\mathrm{mul}}$ & Estimated 64-bit multiplication cost for HE-PMult. 
&
$\Gamma_{\mathrm{rot}}$ & Estimated 64-bit multiplication cost for HE-Rot. \\

$h$ & Hamming weight of the ternary secret key. \\

\midrule
\multicolumn{4}{@{}l}{\textbf{Ring elements, ciphertexts, and shares}} \\
\midrule
$\hat{x}(X)\in R_q$ & Ring element / polynomial.
&
$\hat{x}[i]$ & $i$-th coefficient of $x(X)$. \\

$\rho_\eta(\hat{x})$ & Negacyclic shift $X^\eta \hat{x} \bmod (X^N+1)$.
&
$\mathsf{sk}=\hat{s}(X)$ & BFV secret-key polynomial. \\

$\hat{\epsilon}$ & BFV noise polynomial. 
&
$\chi_e$ & BFV error distribution. \\

$\hat{m}$ & Plaintext polynomial. 
&
$\llbracket \hat{m}\rrbracket$ & BFV encryption of $\hat{m}$. \\

$(\hat{c_0},\hat{c_1})$ & Fresh BFV ciphertext components. 
&
$(\hat{c}'_0,\hat{c}'_1)$ & Ciphertext components after truncation. \\

$\langle x\rangle$ & Additive sharing of $x$ over $\mathbb{Z}_{2^\ell}$.
&
$\langle x\rangle_C,\langle x\rangle_S$ & Client and server shares. \\

$r_{o,\xi}$ & Server random mask/share for output $(o,\xi)$.
&
$\mathbf{r}$ & Vector of random masks. \\

\midrule
\multicolumn{4}{@{}l}{\textbf{Convolution and SPA-CONV}} \\
\midrule
$\mathcal{T}$ & Input activation tensor.
&
$\mathcal{W}$ & Convolution weight tensor. \\

$H,W$ & Input height and width.
&
$H',W'$ & Output height and width. \\

$K$ & Kernel size; kernel is $K\times K$.
&
$C_{\mathrm{in}},C_{\mathrm{out}}$ & Input/output channel counts. \\

$\Omega_K$ & Spatial offsets $\{-r,\ldots,r\}^2$.
&
$(u,v)$ & Spatial offset in $\Omega_K$. \\

$s_{\mathrm{stride}}$ & Convolution stride.
&
$p_{\mathrm{pad}}$ & Padding size. \\

$W_{\mathrm{pp}}$ & Padded row width.
&
$\delta(u,v)$ & Spatial coefficient offset $uW_{\mathrm{pad}}+v$. \\

$\Lambda$ & Input packing map.
&
$\Lambda(c,j)=(\iota,\sigma)$ & Source ciphertext index and coefficient offset. \\

$\mathrm{CT}_t$ & $t$-th encrypted input block.
&
$M$ & Number of encrypted input blocks. \\

$B$ & Output ciphertext blocks per output channel.
&
$j$ & Output block index. \\

$\nu$ & Split factor.
&
$o$ & Output channel index. \\

$\mathcal{C}(o)$ & Contributing input-channel set for output $o$.
&
$T_o$ & Number of shifted scalar terms, $|\mathcal{C}(o)|K^2$. \\

$A_{o,j}$ & Raw encrypted output block.
&
$Y_{o,j}$ & Truncated/reduced encrypted output block. \\

$\mathrm{Trunc}_\tau(\cdot)$ & Ciphertext-side $\tau$-bit truncation.
&
$\mathrm{MaskReduce}(\cdot)$ & Power-of-two modular reduction by masking. \\

\midrule
\multicolumn{4}{@{}l}{\textbf{Grouped export and reconstruction}} \\
\midrule
$\Phi$ & Output extraction map.
&
$\Phi(o,\xi)=(\gamma,\kappa)$ & Source ciphertext and coefficient index. \\

$\xi$ & Logical output spatial position.
&
$\kappa$ & Coefficient index inside an RLWE ciphertext. \\

$\gamma$ & Source RLWE ciphertext group index.
&
$\mathcal{P}_\gamma$ & Logical outputs extracted from source $\gamma$. \\

$G_\gamma$ & Grouped export record for source $\gamma$.
&
$\mathcal{G}$ & Set/list of grouped export records. \\

$G$ & Number of non-empty export groups.
&
$m_g$ & Number of exported coefficients in group $g$. \\

$\beta_g$ & Transport bitwidth for group $g$.
&
$E$ & Total exported values, $\sum_g m_g=C_{\mathrm{out}}H'W'$. \\

$\hat{a}_\gamma(X)$ & Shared RLWE linear component for group $\gamma$.
&
$\hat{b}_\gamma(X)$ & RLWE constant component for group $\gamma$. \\

$\mathbf{b}_\gamma$ & Vector of exported constant coefficients.
&
$\boldsymbol{\kappa}_\gamma$ & Vector of coefficient indices. \\

$\mathbf{ch}_\gamma$ & Vector of output-channel metadata.
&
$\mathbf{off}_\gamma$ & Vector of output-position metadata. \\

$\hat{d}_\gamma(X)$ & Client-side product $a_\gamma(X)\mathsf{sk}(X)$.
&
$u_i$ & Reconstructed masked coefficient before decoding. \\

$z_i$ & Decoded client share for one exported value.
&
$\mathbf{z}$ & Vector of decoded client shares. \\

\bottomrule
\end{longtable}
\endgroup

\section{Related Work}
\label{app:related_works}

\begin{table}[h]
\centering
\scriptsize
\setlength{\tabcolsep}{3.0pt}
\renewcommand{\arraystretch}{1.15}
\caption{Comparison with representative hybrid HE/2PC private CNN inference systems.}
\vspace{1mm}
\label{tab:related_works}
\begin{tabular}{p{0.13\columnwidth} p{0.14\columnwidth} p{0.25\columnwidth} p{0.15\columnwidth} p{0.08\columnwidth} p{0.16\columnwidth}}
\toprule
\multirow{2}{=}[ -0.7ex ]{\centering Method} & \multicolumn{2}{c}{Linear side} & \multicolumn{1}{c}{Truncation} & \multicolumn{2}{c}{Nonlinear side} \\
\cmidrule(r){2-3}\cmidrule(r){4-4}\cmidrule(){5-6}
& \multicolumn{1}{c}{Target} & \multicolumn{1}{c}{Approach} & \multicolumn{1}{c}{Approach} & \multicolumn{1}{c}{Target} & \multicolumn{1}{c}{Approach} \\
\midrule
Gazelle~\cite{juvekar2018gazelle} & Conv/MatMul & Packed SIMD & -- & ReLU/Pool & Exact GC \\
CrypTFlow2~\cite{rathee2020cryptflow2} & Conv/MatMul & Exact OT/HE backend & Faithful 2PC & Cmp./Div. & Exact 2PC \\
Cheetah~\cite{huang2022cheetah} & Conv/MatMul & Coefficient-encoding protocol & Lean OT & ReLU & Lean OT \\
Rhombus~\cite{he2024Rhombus} & PW-Conv/MatMul & Efficient output repacking & -- & -- & -- \\
Falcon~\cite{xu2023falcon} & DW-Conv & Dense packing & -- & -- & -- \\
\textbf{Jaguar (ours)} & \textbf{Conv} & \textbf{Power-of-two arithmetic regime} & \textbf{HE-side, zero-cost truncation} & \textbf{ReLU} & \textbf{Reduced-bitwidth evaluation} \\
\bottomrule
\end{tabular}
\vspace{1mm}
\end{table}

\section{Proof of Theorem~\ref{thm:precise_trunc}}
\label{app:proof}

\precisetruncthm*
\begin{lemma}[Unsigned shifts over power-of-two residues]
\label{lem:unsigned_shift}
Let $q=2^Q$ and $q_f=2^{Q-f}$ with $0<f\le Q$.
For any integer $\tilde{c}\in[-2^{Q-1},2^{Q-1})$, let
$c=\tilde{c}\bmod q\in[0,q)$ be its canonical unsigned residue.
Then
\[
    \left\lfloor \frac{c}{2^f} \right\rfloor
    \equiv
    \left\lfloor \frac{\tilde{c}}{2^f} \right\rfloor
    \pmod{q_f}.
\]
Thus, a logical right shift of the canonical unsigned residue modulo $2^Q$
represents the arithmetic right shift of the signed centered representative
modulo $2^{Q-f}$.
\end{lemma}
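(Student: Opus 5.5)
The plan is a short direct argument. Since $c=\tilde{c}\bmod q$, there is an integer $k$ with $c=\tilde{c}+kq=\tilde{c}+k2^Q$. Moreover $\tilde{c}\in[-2^{Q-1},2^{Q-1})$ and $c\in[0,2^Q)$ force $k\in\{0,1\}$: $k=0$ when $\tilde{c}\ge 0$, and $k=1$ when $\tilde{c}<0$. I will keep $k$ general, since only its integrality matters.

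The key step is that adding an integer multiple of $2^f$ commutes with the floor after division by $2^f$. Because $f\le Q$, the quantity $k2^Q/2^f=k2^{Q-f}$ is an integer. Using $\lfloor x+n\rfloor=\lfloor x\rfloor+n$ for integer $n$, we get
\[
\left\lfloor \frac{c}{2^f}\right\rfloor=\left\lfloor \frac{\tilde{c}}{2^f}+k2^{Q-f}\right\rfloor=\left\lfloor \frac{\tilde{c}}{2^f}\right\rfloor+k\,q_f ,
\]
and reducing modulo $q_f=2^{Q-f}$ gives the claimed congruence.

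For the interpretive sentence, I will note two standard facts. First, the logical right shift of a nonnegative integer $c<2^Q$ equals $\lfloor c/2^f\rfloor$, which lies in $[0,q_f)$ and is therefore already the canonical residue mod $q_f$. Second, the arithmetic right shift of a signed integer is by definition $\lfloor\tilde{c}/2^f\rfloor$, rounding toward $-\infty$, as in two's complement. The displayed congruence then says exactly that the logical shift of the unsigned residue mod $2^Q$ is the canonical residue mod $2^{Q-f}$ of the arithmetic shift of the centered representative.

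I do not expect a real obstacle. The only points needing care are the divisibility $2^f\mid 2^Q$, which is where $f\le Q$ and the power-of-two modulus enter, and the convention that the arithmetic shift means the floor rather than truncation toward zero. The edge case $f=Q$ gives $q_f=1$, and the statement holds trivially.
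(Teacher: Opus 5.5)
Your proof is correct and matches the paper's argument. The paper splits on the sign of $\tilde{c}$, writing $c=\tilde{c}$ or $c=2^Q+\tilde{c}$, and uses $2^f\mid 2^Q$ to pull the integer $2^{Q-f}$ out of the floor. That is exactly your step, with $k\in\{0,1\}$ handled uniformly instead of by cases.
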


\begin{proof}[Proof of Lemma~\ref{lem:unsigned_shift}]
If $\tilde{c}\ge 0$, then $c=\tilde{c}$ and the claim is immediate.
If $\tilde{c}<0$, then $c=2^Q+\tilde{c}$. Since $2^f$ divides $2^Q$,
\[
    \left\lfloor \frac{c}{2^f} \right\rfloor
    =
    \left\lfloor \frac{2^Q+\tilde{c}}{2^f} \right\rfloor
    =
    2^{Q-f}
    +
    \left\lfloor \frac{\tilde{c}}{2^f} \right\rfloor.
\]
Therefore,
\[
    \left\lfloor \frac{c}{2^f} \right\rfloor
    \equiv
    \left\lfloor \frac{\tilde{c}}{2^f} \right\rfloor
    \pmod{2^{Q-f}},
\]
which proves the claim.
\end{proof}

\begin{proof}[Proof of Theorem~\ref{thm:precise_trunc}]
    By Lemma~\ref{lem:unsigned_shift}, the logical right shifts used in
    Algorithm~\ref{alg:trunc} on canonical unsigned residues are equivalent,
    modulo $q_f=2^{Q-f}$, to arithmetic right shifts of the corresponding signed
    centered representatives. Hence, below we analyze $\hat{c}'_0$ and $\hat{c}'_1$
    as coefficient-wise arithmetic shifts of $\hat{c}_0$ and $\hat{c}_1$ in the
    reduced ring $R_{q_f}$.

    Since $(\hat{c}_0, \hat{c}_1) \in R_q^2$ encrypts $\hat{m}_0 \in R_p$, we have $[\hat{c}_0 + \hat{c}_1\hat{s}]_q = \Delta \hat{m}_0 + \hat{\epsilon}$ for some noise polynomial $\hat{\epsilon}$. Hence, for some integer-coefficient polynomial $\hat{r}$,
    \[
        \frac{\hat{c}_0 + \hat{c}_1\hat{s}}{2^f} = \frac{\Delta \hat{m}_0}{2^f} + \frac{\hat{\epsilon}}{2^f} + \frac{q}{2^f}\hat{r}
    \]
    Using $\lfloor x \rfloor \leq x < \lfloor x \rfloor+1$ for $\forall x \in \mathbb{R}$, and $\hat{s}[i] \in \{-1, 0, 1\}$ for $\forall i \in [N]$, 
    \[
       (\frac{\hat{c}_0 + \hat{c}_1\hat{s}}{2^f})[i] - N - 1 < (\hat{c}'_0 + \hat{c}'_1\hat{s})[i] < (\frac{\hat{c}_0 + \hat{c}_1\hat{s}}{2^f})[i] + N, \quad \forall i \in [N]
    \]
    \[
        \Rightarrow \; (\frac{\Delta \hat{m}_0}{2^f} + \frac{\hat{\epsilon}}{2^f} + \frac{q}{2^f}\hat{r})[i] - N - 1 < (\hat{c}'_0 + \hat{c}'_1\hat{s})[i] < (\frac{\Delta \hat{m}_0}{2^f} + \frac{\hat{\epsilon}}{2^f} + \frac{q}{2^f}\hat{r})[i]+N
    \]
    \[
        \Rightarrow \; (\hat{m}_0 + \frac{\hat{\epsilon}}{\Delta})[i] - \frac{(N+1)2^f}{\Delta} < (\frac{2^f}{\Delta}[\hat{c}'_0 + \hat{c}'_1\hat{s}]_{q_f})[i] < (\hat{m}_0 + \frac{\hat{\epsilon}}{\Delta})[i]+\frac{2^f}{\Delta}N
    \]
    As long as $||\hat{\epsilon}||_\infty < \Delta/2 - (N+1)2^f$, we get $\lfloor \frac{2^f}{\Delta}[\hat{c}'_0 + \hat{c}'_1\hat{s}]_{q_f} \rceil = \hat{m}_0$. Therefore, Algorithm~\ref{alg:dec} returns the plaintext $\hat{m}_1$, where $\hat{m}_1[i] = (\hat{m}_0[i] \gg f)$.
\end{proof}

\section{Noise-Budget Analysis for Maximum-Accumulation Layers}
\label{app:noise_budget}

To assess whether Jaguar's homomorphic convolution remains within the
ciphertext-noise budget, we analyze the layers with the largest
scalar--ciphertext accumulation in each evaluated network. For a fixed output
ciphertext coefficient, these layers maximize the number of accumulated
weighted ciphertext terms and therefore represent the most adverse candidates
for noise growth in the SPA-CONV path.

The maximum accumulation counts are $4608$ for ResNet-18 (layers 16, 18, and
19), $4608$ for ResNet-50 (layers 44, 48, and 51), and $960$ for MobileNetV2
(layers 44, 47, and 50). For each such layer, we reconstruct the BN-fused
convolution weights exactly as in the Jaguar implementation. Let
$w^{\mathrm{centered}}_{o,j}$ denote the centered convolution weight and
$\mathrm{BNMul}_o$ the BN scaling factor for output channel $o$. The fused
integer weight is
\[
w^{\mathrm{fused}}_{o,j}
=
\mathrm{Round}\!\left(
\frac{
w^{\mathrm{centered}}_{o,j}\cdot \mathrm{BNMul}_o
}{2^{s_{\mathrm{BN}}}}
\right),
\]
where $s_{\mathrm{BN}}$ is the BN-fusion scale parameter used by the runtime.

\paragraph{Gaussian noise propagation.}
The noise polynomial coefficients of a fresh symmetrically-encrypted BFV ciphertext are independent, zero-mean
Gaussian random variables,
\[
n_{\mathrm{in}}[j]\sim \mathcal{N}(0,\sigma_{\mathrm{in}}^2),
\qquad
\sigma_{\mathrm{in}}=3.2.
\]
For output channel $o$, the accumulated pre-truncation output noise is
\[
n_{\mathrm{out}}[o]
=
\sum_j
w^{\mathrm{fused}}_{o,j}\, n_{\mathrm{in}}[j].
\]
Under the independence assumption,
\[
\mathrm{Var}[n_{\mathrm{out}}[o]]
=
\sigma_{\mathrm{in}}^2
\sum_j
\left(w^{\mathrm{fused}}_{o,j}\right)^2,
\qquad
\sigma_{\mathrm{out}}[o]
=
\sigma_{\mathrm{in}}
\sqrt{
\sum_j
\left(w^{\mathrm{fused}}_{o,j}\right)^2
}.
\]
Thus, the noise amplification of each output channel is governed by the
$\ell_2$ norm of its BN-fused weight vector, not by the accumulation count
alone.

\paragraph{Consistent noise scale and threshold.}
The comparison must be performed at a consistent scale. Jaguar uses
$(Q,P,f)=(54,30,9)$, so
\[
\Delta=2^{Q-P}=2^{24}.
\]
The pre-truncation BFV rounding margin is
\[
B_{\mathrm{dec}}
=
\Delta/2
=
2^{Q-P-1}
=
2^{23}.
\]
The sufficient condition in Theorem~\ref{thm:precise_trunc} is slightly
stronger because it also accounts for the component-wise shift error:
\[
B_{\mathrm{trunc}}
=
\Delta/2-(N+1)2^f
=
2^{23}-2049\cdot2^9
=
7{,}339{,}520.
\]
Equivalently, after the $f$-bit shift, the corresponding theorem-level
threshold is
\[
\frac{B_{\mathrm{trunc}}}{2^f}
=
2^{Q-P-f-1}-(N+1)
=
14{,}335.
\]
Since the noise standard deviation is reduced by the same factor $2^f$ after
truncation, the effective margin is unchanged as long as both quantities are
compared at the same scale. In this appendix, we compare pre-truncation noise
estimates against the pre-truncation theorem bound $B_{\mathrm{trunc}}$.

\begin{figure}[t]
    \centering
    \includegraphics[width=0.85\linewidth]{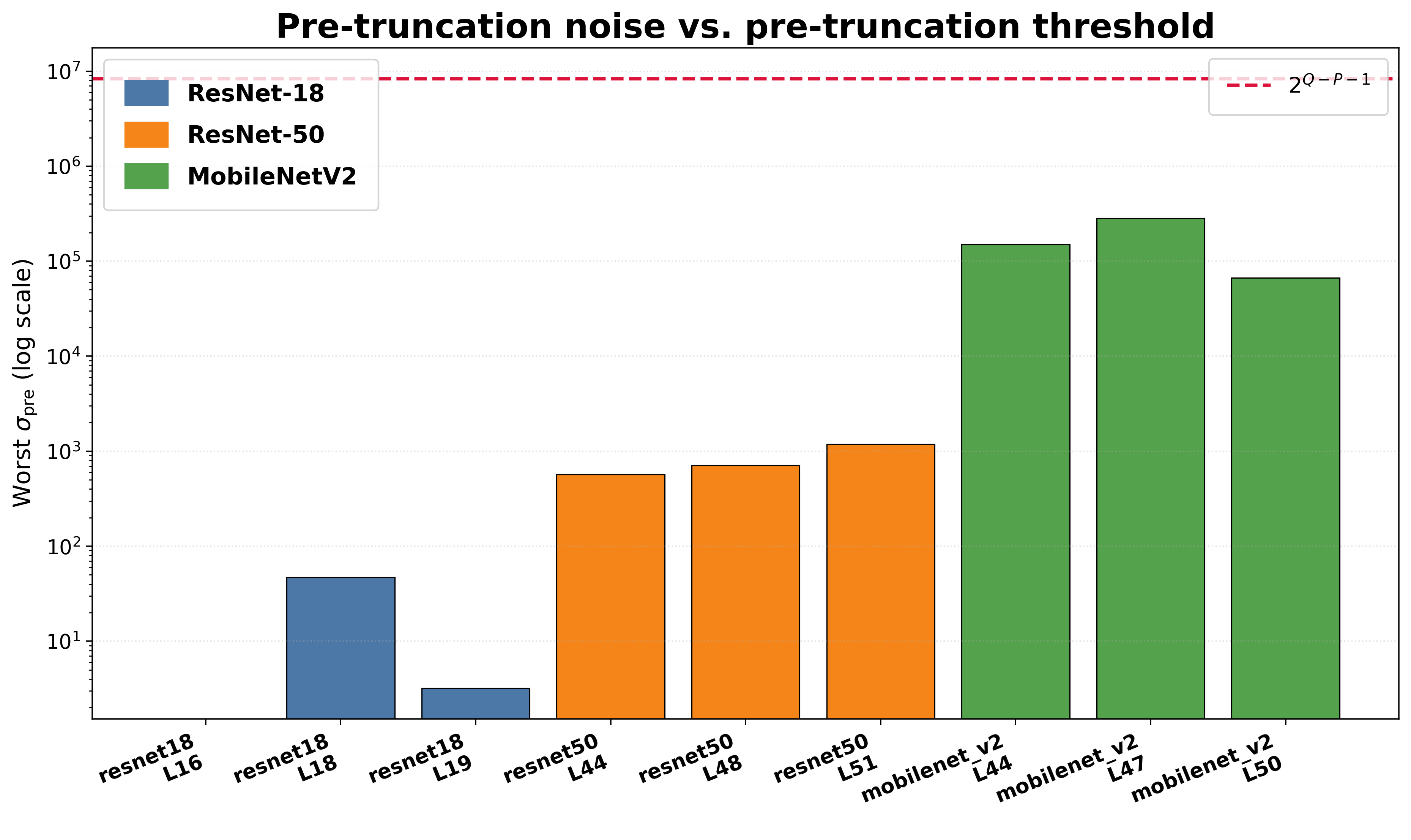}
    \caption{
    Worst-case pre-truncation output-noise standard deviation at the maximum
    scalar--ciphertext accumulation layers of Jaguar. Each bar corresponds to
    the output channel with the largest estimated $\sigma_{\mathrm{out}}$ in
    that layer. The dashed line marks the pre-truncation BFV rounding margin
    $B_{\mathrm{dec}}=2^{Q-P-1}=2^{23}$; the theorem-level bound is
    $B_{\mathrm{trunc}}=B_{\mathrm{dec}}-(N+1)2^f$. All estimated
    worst-channel noise levels remain far below both boundaries.
    }
    \label{fig:jaguar_noise_overall_pre}
\end{figure}

\begin{figure}[t]
    \centering
    \begin{subfigure}[t]{0.32\linewidth}
        \centering
        \includegraphics[width=\linewidth]{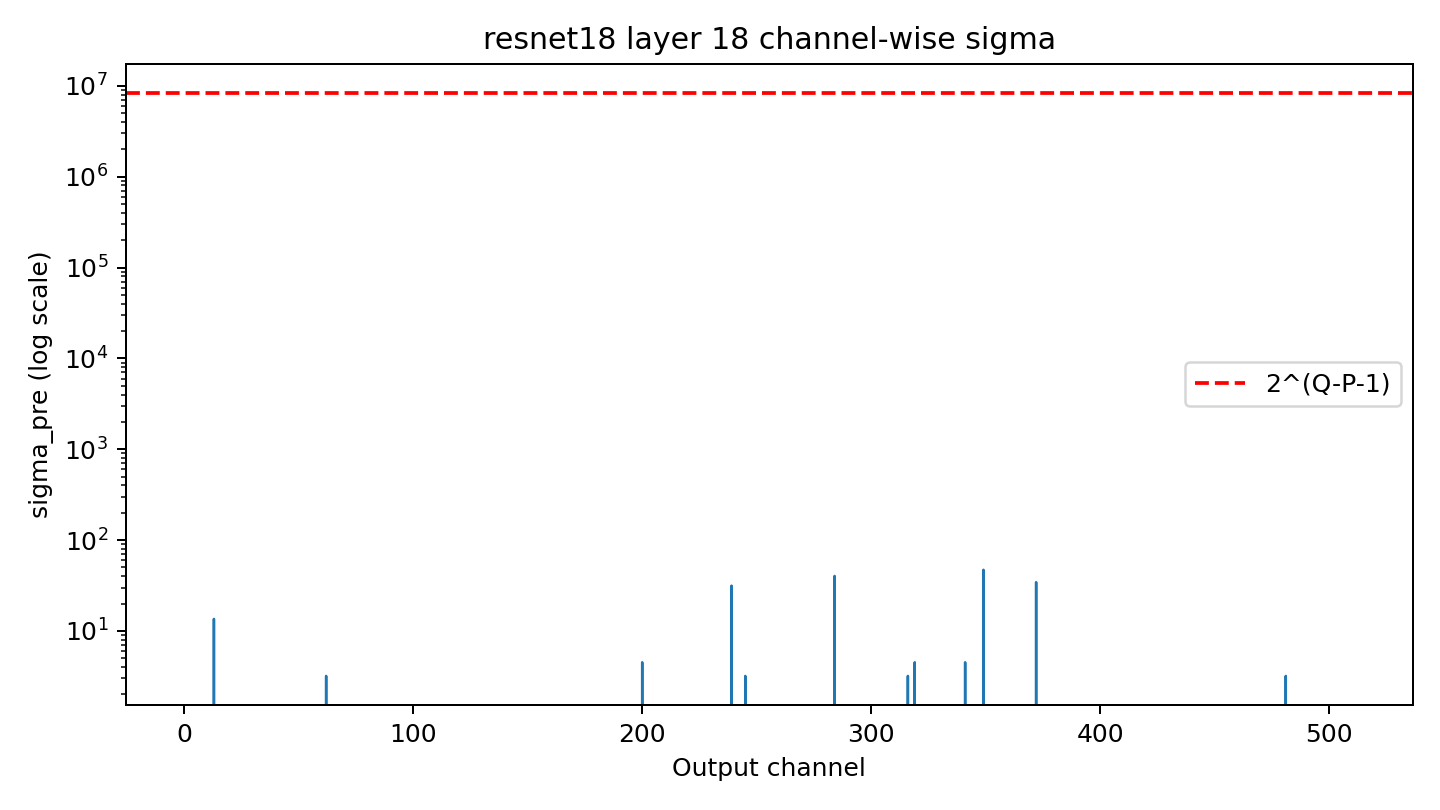}
        \caption{ResNet-18, layer 18}
        \label{fig:jaguar_sigma_resnet18_l18_pre}
    \end{subfigure}
    \hfill
    \begin{subfigure}[t]{0.32\linewidth}
        \centering
        \includegraphics[width=\linewidth]{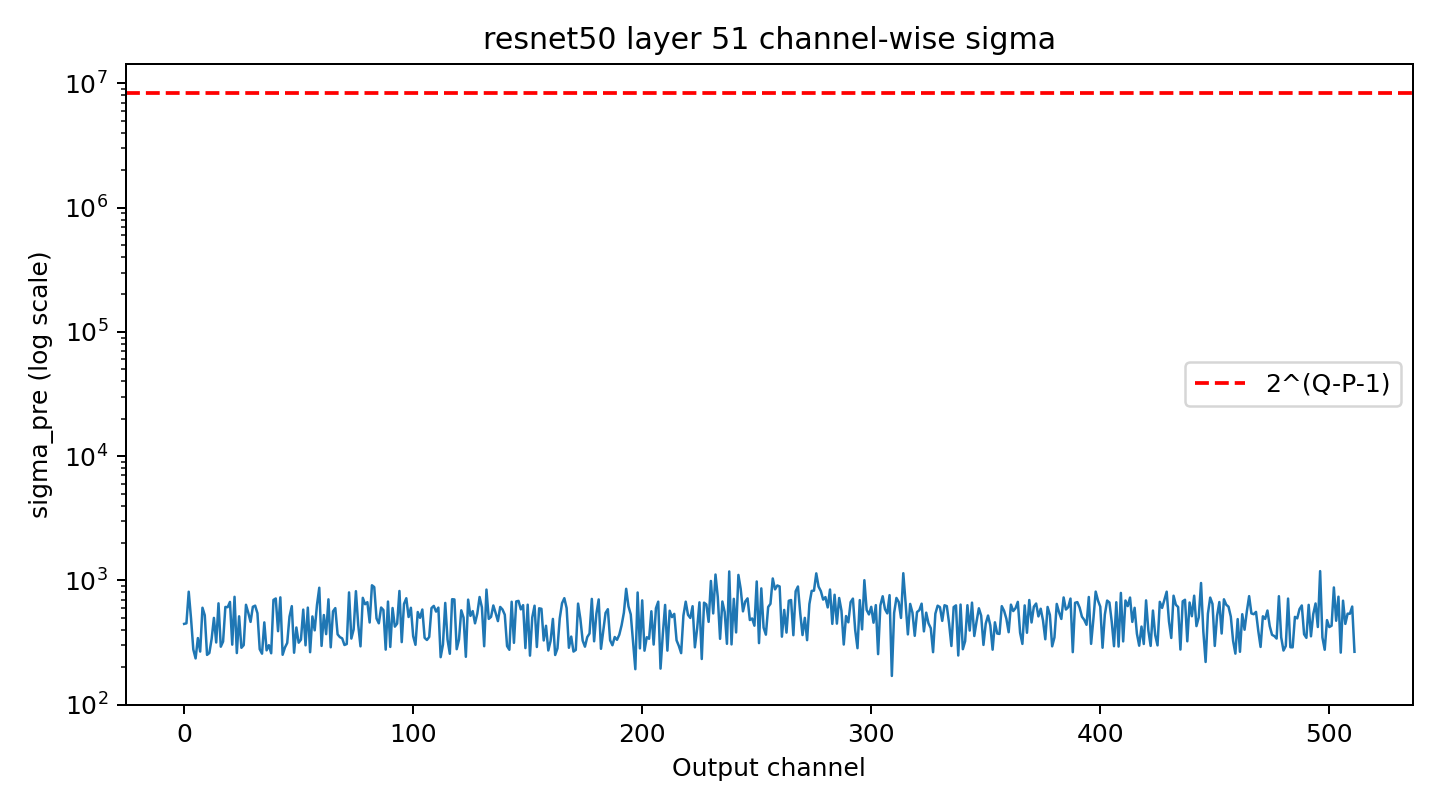}
        \caption{ResNet-50, layer 51}
        \label{fig:jaguar_sigma_resnet50_l51_pre}
    \end{subfigure}
    \hfill
    \begin{subfigure}[t]{0.32\linewidth}
        \centering
        \includegraphics[width=\linewidth]{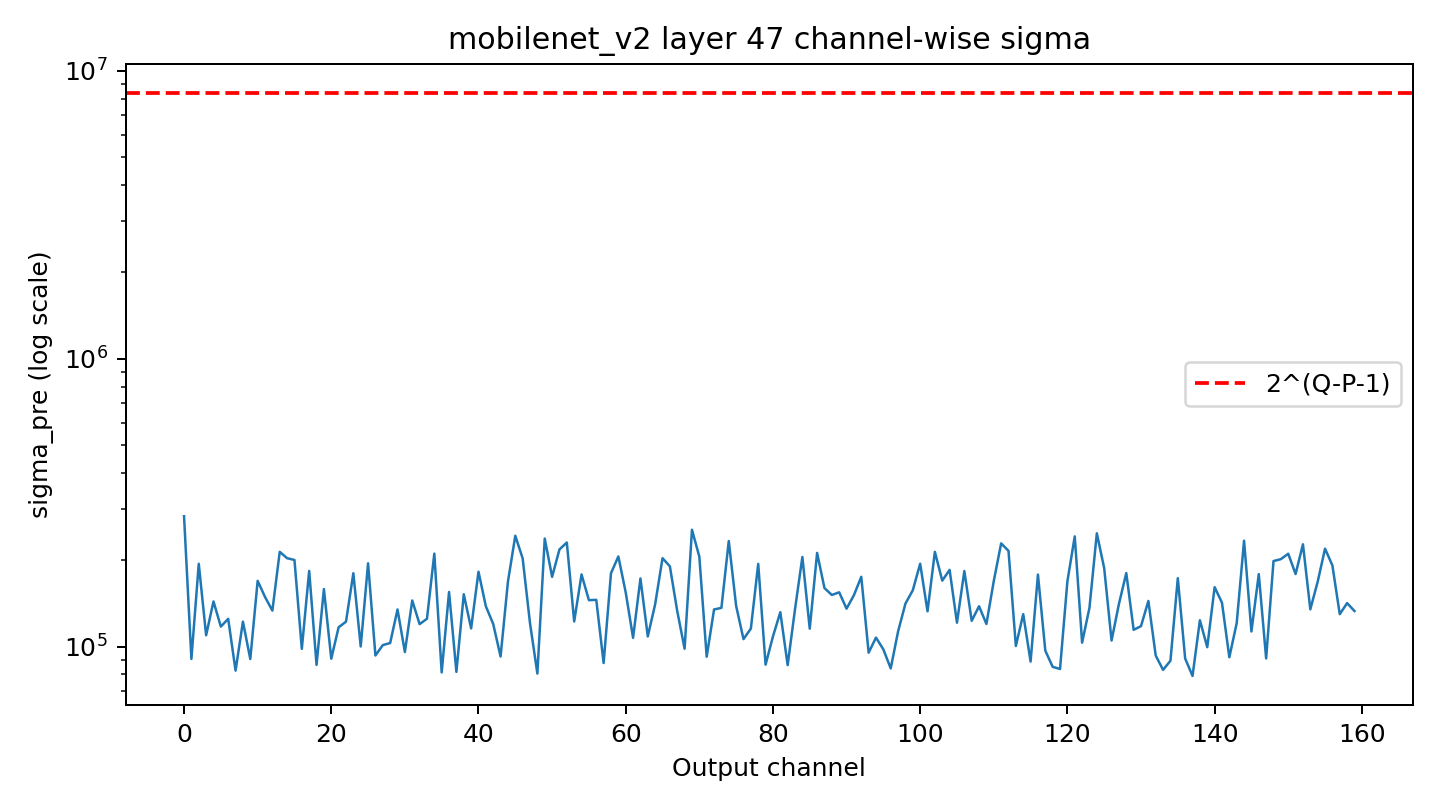}
        \caption{MobileNetV2, layer 47}
        \label{fig:jaguar_sigma_mobilenet_l47_pre}
    \end{subfigure}
    \caption{
    Channel-wise pre-truncation output-noise standard deviation for
    representative maximum-accumulation layers. The variation across output
    channels reflects the $\ell_2$ norms of the corresponding BN-fused weight
    vectors.
    }
    \label{fig:jaguar_channelwise_sigma_pre}
\end{figure}

\paragraph{Results.}
The largest estimated pre-truncation standard deviations among the analyzed
maximum-accumulation layers are $47.14$ for ResNet-18, $1187.05$ for
ResNet-50, and $2.84\times10^5$ for MobileNetV2. Relative to
$B_{\mathrm{trunc}}=7{,}339{,}520$, these correspond to margins of
approximately $1.56\times10^5\sigma$, $6.18\times10^3\sigma$, and
$25.9\sigma$, respectively. The corresponding Gaussian tail probability
\[
\Pr\!\left(
\left|\mathcal{N}(0,\sigma_{\max}^2)\right|
>
B_{\mathrm{trunc}}
\right)
=
\mathrm{erfc}\!\left(
\frac{B_{\mathrm{trunc}}}{\sqrt{2}\sigma_{\max}}
\right)
\]
is numerically negligible; for the worst MobileNetV2 layer, it is approximately
$9.3\times10^{-148}$.

MobileNetV2 exhibits the largest estimated noise growth, especially at layer
47. This is not caused by missing BN fusion. Rather, it results from a
$960$-term pointwise ($1\times1$) accumulation combined with relatively large
BN scaling coefficients, which increase the $\ell_2$ norms of the BN-fused
weights. In contrast, the ResNet layers accumulate more terms
($512\times3\times3=4608$), but their BN-fused weight norms are much smaller,
so their propagated noise remains substantially lower.

This analysis 
provides a conservative sanity
check for the most accumulation-heavy homomorphic convolution layers used in
Jaguar: under the Gaussian propagation model, the BN-fused weight norms of the
analyzed worst-case layers remain far from the theorem-level truncation boundary.

\section{Security and Parameter Selection}
\label{app:security_guarantee}

Jaguar instantiates BFV over the negacyclic ring
$R_q=\mathbb{Z}_q[X]/(X^N+1)$ with a power-of-two ciphertext modulus
$q=2^Q$ and a power-of-two plaintext modulus $p=2^P$. The parameter set used in
our main evaluation is
\[
    (N,Q,P)=(2048,54,30),
    \qquad q=2^{54}, \quad p=2^{30}.
\]
This appendix explains how these parameters are selected.

\paragraph{Security target.}
The security of BFV is determined by the concrete hardness of the underlying
RLWE instance, which depends on the ring dimension, ciphertext modulus, secret
distribution, and error distribution, rather than on whether the ciphertext
modulus is prime. We follow the Homomorphic Encryption Standard and evaluate the
parameter set using the standard attacks against RLWE~\cite{homencstandard}.
For the small-secret distribution used in Jaguar, the HE standard table with
classical BKZ.sieve cost model gives the following estimates for
$N=2048$ and $\log_2 q=54$:
\[
    \lambda_{\mathrm{uSVP}}=129.7,\qquad
    \lambda_{\mathrm{dec}}=144.4,\qquad
    \lambda_{\mathrm{dual}}=134.2.
\]
We take the minimum over these attacks, yielding a classical security estimate
of $129.7$ bits. Thus, $(N,Q)=(2048,54)$ satisfies the classical 128-bit
security target. This choice also matches the 128-bit coefficient-modulus
budget commonly used by BFV implementations for $N=2048$~\cite{sealmanual}.

\paragraph{Why $N=2048$.}
The ring dimension $N$ directly affects ciphertext size and the cost of
coefficient-wise arithmetic. Since Jaguar's convolution kernel is built from
scalar--polynomial products, coefficient shifts, additions, and masks, its
linear-layer cost scales linearly with $N$ for each ciphertext block. We
therefore choose the smallest ring dimension that supports the required
plaintext space and satisfies the classical 128-bit security target. Under the
HE standard table, $N=2048$ permits a 54-bit ciphertext modulus at 128-bit
classical security, making it the most efficient choice for our target
parameter range.

\paragraph{Why $P=30$.}
The plaintext modulus determines the integer range used for fixed-point
inference and the ring used by the subsequent 2PC protocols. We use
$p=2^{30}$ so that HE outputs are naturally represented over the same
power-of-two ring as the additive shares used by the non-linear protocols.
The choice $P=30$ also provides sufficient dynamic range for the fixed-point
configuration used in our ImageNet models. In Appendix~\ref{app:param_selection}, we
validate this choice by comparing fixed-point inference against the FP32
torchvision baseline.

\paragraph{Why $Q=54$.}
Given $N=2048$, increasing $Q=\log_2 q$ increases the available noise budget but
reduces the concrete RLWE security level. We choose $Q=54$ because it is the
largest ciphertext-modulus bit-length allowed by the classical 128-bit security
table for the selected ring dimension. This maximizes the noise budget while
remaining within the target security level.

\paragraph{Compatibility with ciphertext-side truncation.}
For the selected parameters, the BFV scale is
\[
    \Delta=\frac{q}{p}=2^{Q-P}=2^{24}.
\]
Jaguar performs ciphertext-side truncation by $f$ bits, where $f=9$ is the
fixed-point fractional precision used in our evaluation. Since
\[
    2^f = 2^9 \mid \Delta = 2^{24},
\]
the divisibility condition required by Section~3.1 is satisfied. After
truncation, the effective scale remains $\Delta/2^f=2^{15}$, which is sufficient
for decoding target-precision fixed-point values.

The correctness condition from Section~3.1 requires the accumulated BFV noise
before truncation to remain below
\[
    \frac{\Delta}{2} - (N+1)2^f.
\]
With $N=2048$, $\Delta=2^{24}$, and $f=9$, this bound becomes
\[
    2^{23} - 2049\cdot 2^9 = 7{,}339{,}520.
\]
Jaguar's HE computation has multiplicative depth one per linear block before
the result is converted back to additive shares. Therefore, the noise growth is
limited to one plaintext--ciphertext linear evaluation, and we empirically
validate correctness by checking that the decrypted, truncated HE outputs match
the corresponding fixed-point reference outputs for all evaluated layers.

\paragraph{Classical versus post-quantum security.}
The parameter set above is selected for classical 128-bit security. Under the
post-quantum BKZ.qsieve table in the HE standard, a smaller modulus budget is
required for $N=2048$ to claim 128-bit post-quantum security. If post-quantum
128-bit security is required, Jaguar can either reduce $Q$ accordingly or
increase the ring dimension, e.g., to $N=4096$, while keeping the same
power-of-two arithmetic design. The algorithms in Sections~3 and~4 are unchanged
by this parameter adjustment, although the per-ciphertext arithmetic cost grows
with $N$.

\section{Parameter Selection for Fixed-Point Inference}
\label{app:param_selection}

\vspace{-2mm}
\begin{table}[H]
\centering
\caption{Agreement with the FP32 torchvision baseline for the original OpenCheetah setting and the tested $(f,P)$ candidates.}
\vspace{1mm}
\label{tab:kell_selection}
\setlength{\tabcolsep}{5pt}
\renewcommand{\arraystretch}{1.15}
\begin{tabular}{lccccc}
\toprule
\multirow{2}{*}{Model}
& \multicolumn{5}{c}{OpenCheetah fixed-point setting $(f,P)$} \\
\cmidrule(lr){2-6}
& \makecell{Default\\$(12,37)$}
& $(9,25)$
& $(9,27)$
& $(9,29)$
& $(9,30)$ \\
\midrule
ResNet18 & 97.0\% & 0.0\% & 77.0\% & 85.0\% & 93.5\% \\
ResNet50 & 97.5\% & 0.0\% & 6.0\% & 8.0\% & 96.0\% \\
MobileNetV2 & 87.0\% & 0.0\% & 87.5\% & 92.0\% & 94.5\% \\
\bottomrule
\end{tabular}
\end{table}

We re-selected $(f,P)$ using a two-stage procedure after extending the OpenCheetah evaluation beyond the originally provided models to additional architectures, including ResNet18 and MobileNetV2. In the first stage, we used plaintext fixed-point simulation to screen candidate settings by measuring their agreement with the FP32 torchvision baseline under ideal arithmetic. This stage captures deterministic fixed-point effects such as quantization error and range overflow, but it is not sufficient for the final parameter choice because the actual OpenCheetah pipeline introduces approximation in the nonlinear stage. In particular, OpenCheetah places truncation after ReLU and uses a specialized truncation procedure under the assumption that the post-ReLU value is nonnegative. As a result, an error introduced around the ReLU boundary can propagate into the subsequent truncation step and lead to larger downstream deviations in later activations or logits. The impact of this effect is also model-dependent, which motivates the second stage. In the second stage, we therefore validated the shortlisted candidates by running OpenCheetah on 500 examples sampled uniformly at random from the validation set, and compared their outputs against the FP32 baseline. We used the same sampled inputs across candidate settings so that the agreement rates reflect parameter sensitivity rather than differences in the sampled subset. As shown in Table~\ref{tab:kell_selection}, the original OpenCheetah reference setting $(f,P)=(12,37)$ already agrees well with the FP32 baseline on ResNet18 and ResNet50 ($97.0\%$ and $97.5\%$, respectively), but drops to $87.0\%$ on MobileNetV2. Under $f=9$, we observed that for ResNet50 the agreement changes sharply from $8.0\%$ at $P=29$ to $96.0\%$ at $P=30$, while $(f,P)=(9,30)$ also achieves high agreement on ResNet18 and MobileNetV2 ($93.5\%$ and $94.5\%$, respectively).
As an additional label-level sanity check, the selected setting $(f,P)=(9,30)$ achieves sampled top-1 accuracies of $74\%$, $84.5\%$, and $77.0\%$ on ResNet18, ResNet50, and MobileNetV2, respectively, which are comparable to the original OpenCheetah setting ($75.5\%$, $82.0\%$, and $70.0\%$).
Based on these results, we use $(f,P)=(9,30)$ in all subsequent experiments.

\section{Additional Experimental Results}
\begin{table*}[h]
\centering
\small
\setlength{\tabcolsep}{4pt}
\renewcommand{\arraystretch}{1.08}
\caption{
Case-level compute-only convolution microbenchmark with AVX disabled and a single thread.
Latency is averaged over 10 runs.
Speedup is computed as Cheetah over Jaguar.
}
\label{tab:conv-case-avxoff}
\begin{tabular}{l l c c c c}
\toprule
Conv Type & Layout & Dimension $(W, C_{\mathrm{in}}, C_{\mathrm{out}}, K)$ & Cheetah (ms) & Jaguar (ms) & Speedup \\
\midrule

\multirow{14}{*}{Dense 3$\times$3}
& \multirow{10}{*}{packed}
& $(32, 128, 128, 3)$ & 2631.60 & 272.95 & 9.64$\times$ \\
& & $(32, 16, 16, 3)$ & 45.21 & 4.82 & 9.38$\times$ \\
& & $(28, 128, 128, 3)$ & 1975.56 & 266.77 & 7.41$\times$ \\
& & $(28, 512, 512, 3)$ & 30956.48 & 4303.07 & 7.19$\times$ \\
& & $(16, 64, 64, 3)$ & 180.03 & 69.03 & 2.61$\times$ \\
& & $(16, 256, 256, 3)$ & 2699.46 & 1099.17 & 2.46$\times$ \\
& & $(14, 256, 256, 3)$ & 2007.58 & 1132.27 & 1.77$\times$ \\
& & $(8, 128, 128, 3)$ & 240.01 & 273.78 & 0.88$\times$ \\
& & $(7, 512, 512, 3)$ & 2717.50 & 4376.67 & 0.62$\times$ \\
& & $(4, 512, 512, 3)$ & 1241.24 & 4347.74 & 0.29$\times$ \\
\cmidrule(lr){2-6}
& \multirow{4}{*}{split}
& $(64, 64, 64, 3)$ & 7715.16 & 204.37 & 37.75$\times$ \\
& & $(56, 64, 64, 3)$ & 1958.06 & 134.66 & 14.54$\times$ \\
& & $(56, 256, 256, 3)$ & 30964.59 & 2165.62 & 14.30$\times$ \\
& & $(224, 64, 64, 3)$ & 30962.18 & 2417.93 & 12.81$\times$ \\

\midrule
\multirow{13}{*}{Pointwise 1$\times$1}
& \multirow{9}{*}{packed}
& $(28, 128, 512, 1)$ & 6030.81 & 129.04 & 46.73$\times$ \\
& & $(28, 512, 128, 1)$ & 5896.45 & 126.95 & 46.45$\times$ \\
& & $(28, 32, 192, 1)$ & 614.23 & 14.55 & 42.20$\times$ \\
& & $(14, 1024, 256, 1)$ & 6060.75 & 501.12 & 12.09$\times$ \\
& & $(14, 256, 1024, 1)$ & 6088.65 & 522.41 & 11.65$\times$ \\
& & $(14, 96, 576, 1)$ & 1344.09 & 121.90 & 11.03$\times$ \\
& & $(7, 512, 2048, 1)$ & 6899.98 & 2025.97 & 3.41$\times$ \\
& & $(7, 2048, 512, 1)$ & 6083.87 & 1983.25 & 3.07$\times$ \\
& & $(7, 320, 1280, 1)$ & 2494.59 & 821.44 & 3.04$\times$ \\
\cmidrule(lr){2-6}
& \multirow{4}{*}{split}
& $(56, 64, 256, 1)$ & 937.30 & 72.18 & 12.99$\times$ \\
& & $(56, 256, 64, 1)$ & 939.37 & 91.27 & 10.29$\times$ \\
& & $(56, 24, 144, 1)$ & 206.50 & 20.41 & 10.12$\times$ \\
& & $(112, 32, 16, 1)$ & 102.24 & 10.31 & 9.91$\times$ \\

\midrule
\multirow{9}{*}{Depthwise 3$\times$3}
& \multirow{6}{*}{packed}
& $(14, 96, 96, 3)$ & 50.29 & 3.86 & 13.04$\times$ \\
& & $(28, 32, 32, 3)$ & 17.90 & 1.42 & 12.57$\times$ \\
& & $(28, 192, 192, 3)$ & 98.95 & 8.09 & 12.23$\times$ \\
& & $(7, 320, 320, 3)$ & 164.85 & 13.54 & 12.18$\times$ \\
& & $(7, 160, 160, 3)$ & 87.60 & 8.23 & 10.65$\times$ \\
& & $(14, 576, 576, 3)$ & 294.73 & 30.57 & 9.64$\times$ \\
\cmidrule(lr){2-6}
& \multirow{3}{*}{split}
& $(56, 24, 24, 3)$ & 14.47 & 2.18 & 6.65$\times$ \\
& & $(56, 144, 144, 3)$ & 78.72 & 13.89 & 5.67$\times$ \\
& & $(112, 32, 32, 3)$ & 70.89 & 16.52 & 4.29$\times$ \\

\bottomrule
\end{tabular}
\end{table*}

\subsection{Full Convolution Microbenchmark Results}
\label{app:full_conv_microbench}

Table~\ref{tab:conv-case-avxoff} provides the full per-case results behind the grouped summary in Table~\ref{tab:conv_microbench_avxoff}. 
We use the same compute-only setup as in the main microbenchmark: each experiment measures only the server-side convolution computation, excluding key generation, encryption, decryption, communication, and non-linear protocols. 
All runs are executed with a single thread and AVX disabled. 
For each shape, we repeat the measurement 10 times and report the average latency. 
Speedup is computed from the paired average latency of Cheetah over Jaguar.

We group cases by convolution type and ciphertext layout. 
Since all spatial dimensions are square, each shape is written as $(W,C_{\mathrm{in}},C_{\mathrm{out}},K)$, where $W$ is the spatial width, $C_{\mathrm{in}}$ and $C_{\mathrm{out}}$ are the input and output channel counts, and $K$ is the kernel size. 
For depthwise convolutions, $C_{\mathrm{out}}=C_{\mathrm{in}}$.

\subsection{AVX-512 Implementation}
\label{app:avx}
\noindent
\begin{minipage}[t]{0.43\columnwidth}
\vspace{0pt}
Jaguar's power-of-two modulus is well-suited to AVX-512; reduction modulo $2^Q$ is a vector bitwise AND and ciphertext-side truncation is a vector arithmetic right shift, so the dominant 64-bit integer kernels in \textsc{SPA-Conv} (scalar--polynomial multiplication, shifted accumulation, server-share addition, truncation, reduction) map directly to Intel vector instructions~\cite{intel_avx512_overview,intel_intrinsics_guide}. 

\end{minipage}
\hfill
\begin{minipage}[t]{0.54\columnwidth}
\vspace{0pt}
\centering
\scriptsize
\setlength{\tabcolsep}{2.4pt}
\renewcommand{\arraystretch}{1.08}

\captionof{table}{
End-to-end latency of Jaguar with and without AVX-512 under LAN and WAN settings.
Numbers in parentheses are latency reduction over AVX OFF.
}
\label{tab:jaguar-avx-latency}
\begin{tabular}{c cc cc}
\toprule
\multirow{2}{*}[-0.6ex]{Model}
& \multicolumn{2}{c}{LAN Latency (s)}
& \multicolumn{2}{c}{WAN Latency (s)} \\
\cmidrule(){2-3} \cmidrule(l){4-5}
& AVX ON & AVX OFF & AVX ON & AVX OFF \\
\midrule
ResNet18 & \textbf{7.90} (8.6\%) & 8.64 & \textbf{15.10} (5.2\%) & 15.94 \\
ResNet50 & \textbf{25.55} (21.3\%) & 32.44 & \textbf{44.91} (12.8\%) & 51.50 \\
MobileNetV2 & \textbf{16.83} (8.6\%) & 18.41 & \textbf{31.81} (1.9\%) & 32.42 \\
\bottomrule
\vspace{0.3mm}
\end{tabular}

\end{minipage}
AVX consistently lowers end-to-end latency without changing communication (Table~\ref{tab:jaguar-avx-latency}); largest gain (21.3\,\% on ResNet-50 LAN) appears where large convolution layers expose the most coefficient-wise work. MobileNetV2's smaller gain reflects more small layers and non-vectorized protocol overheads.

\begin{table*}[h]
\centering
\footnotesize
\setlength{\tabcolsep}{7pt}
\renewcommand{\arraystretch}{0.98}
\setlength{\aboverulesep}{0.25ex}
\setlength{\belowrulesep}{0.25ex}
\caption{
End-to-end latency and communication with AVX enabled.
Speedup and communication reduction are computed relative to Cheetah.
}
\label{tab:e2e_avxon}
\vspace{0mm}
\begin{tabular}{@{}cccccccc@{}}
\toprule
\multirow{2}{*}[-0.5ex]{Model} &
\multirow{2}{*}[-0.5ex]{Protocol} &
\multicolumn{2}{c}{LAN} &
\multicolumn{2}{c}{WAN} &
\multicolumn{2}{c}{Communication} \\
\cmidrule(lr){3-4}
\cmidrule(lr){5-6}
\cmidrule(l){7-8}
& & Latency (s) & Speedup & Latency (s) & Speedup & Comm. (MiB) & Reduction \\
\midrule

\multirow{3}{*}{ResNet18}
& Cheetah & 22.22 & 1.00$\times$ & 29.27 & 1.00$\times$ & 366.94 & 1.00$\times$ \\
& Rhombus & 21.96 & 1.01$\times$ & 29.24 & 1.00$\times$ & 344.01 & 1.07$\times$ \\
& \textbf{Jaguar} & \textbf{7.90} & \textbf{2.81}$\times$ & \textbf{15.10} & \textbf{1.94}$\times$ & \textbf{315.02} & \textbf{1.16}$\times$ \\
\midrule

\multirow{3}{*}{ResNet50}
& Cheetah & 77.89 & 1.00$\times$ & 103.72 & 1.00$\times$ & 1335.64 & 1.00$\times$ \\
& Rhombus & 66.27 & 1.18$\times$ & 80.83 & 1.28$\times$ & \textbf{750.22} & \textbf{1.78}$\times$ \\
& \textbf{Jaguar} & \textbf{25.55} & \textbf{3.05}$\times$ & \textbf{44.91} & \textbf{2.31}$\times$ & 962.0 & 1.39$\times$ \\
\midrule

\multirow{3}{*}{MobileNetV2}
& Cheetah & 41.02 & 1.00$\times$ & 61.20 & 1.00$\times$ & 1042.63 & 1.00$\times$ \\
& Rhombus & 46.75 & 0.88$\times$ & 68.29 & 0.90$\times$ & 836.81 & 1.25$\times$ \\
& \textbf{Jaguar} & \textbf{16.83} & \textbf{2.44}$\times$ & \textbf{31.81} & \textbf{1.92}$\times$ & \textbf{592.5} & \textbf{1.76}$\times$ \\
\bottomrule
\end{tabular}
\end{table*}

Table~\ref{tab:e2e_avxon} reports the end-to-end comparison under the AVX-enabled configuration.
This table complements Table~\ref{tab:jaguar-avx-latency}: while Table~\ref{tab:jaguar-avx-latency}
isolates the effect of AVX-512 within Jaguar, Table~\ref{tab:e2e_avxon} shows the performance of
the optimized Jaguar implementation against prior protocols. With AVX enabled, Jaguar achieves
2.44--3.05$\times$ lower LAN latency and 1.92--2.31$\times$ lower WAN latency than Cheetah, while
also reducing communication by 1.16--1.76$\times$. The latency benefit is largest on ResNet-50, where
large convolution layers expose substantial coefficient-wise work for \textsc{SPA-Conv} and AVX-512
vectorization. The WAN speedups are smaller than the LAN speedups because communication and
round-trip latency, which are unaffected by AVX, account for a larger fraction of end-to-end time.

\section{Limitation and Future Work}
\label{app:limitation}

\noindent
\begin{minipage}[t]{0.52\linewidth}
\paragraph{FC/MatMul extension.}
Jaguar targets the convolution path, which accounts for >\,98\,\% of linear-layer latency in our evaluated CNNs.
The current prototype uses Cheetah's FC backend for the final classifier, since dense MatMul does not expose the spatial shift structure exploited by \textsc{SPA-Conv}. 
Figure~\ref{fig:fc_density} shows a complementary opportunity: as weight density decreases, Jaguar-style accumulation cost drops linearly with the number of nonzero weights, while the Cheetah-style baseline remains flat, with a crossover around density $0.0295$. Sparse and pruned MatMul are therefore a natural setting for extending Jaguar's regime to FC layers.
\end{minipage}
\hfill
\begin{minipage}[t]{0.45\linewidth}
\vspace{0pt}
\centering
\vspace{0mm}
\includegraphics[width=0.8\linewidth]{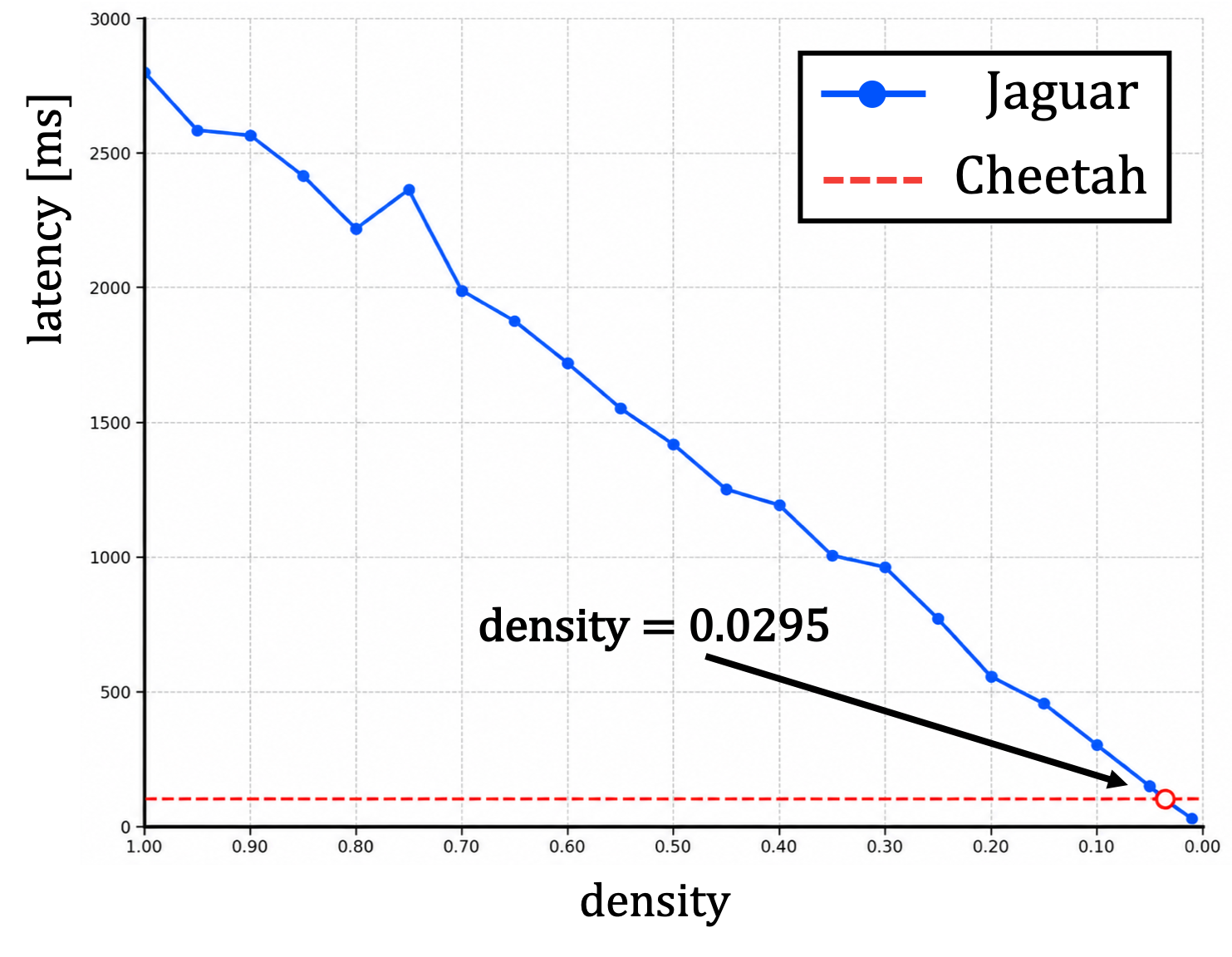}
\vspace{-2mm}
\captionof{figure}{FC/MatMul latency trend under different weight densities.}
\label{fig:fc_density}
\end{minipage}

\section{Derivation of the Kernel-Level Complexity Table}
\label{app:conv_complexity_derivation}

Table~\ref{tab:conv_complexity} compares the dominant online server-side arithmetic of secure convolution kernels after translating different HE backends into estimated 64-bit multiplication counts.
The comparison is not intended to predict full end-to-end latency. It excludes key generation, encryption, decryption, HE-to-share export, communication, additions, coefficient shifts, masks, memory traffic, and other protocol-boundary costs unless explicitly shown. For the baseline systems, we use optimistic lower-envelope estimates under their native packing assumptions, omitting ceiling factors, tiling multipliers, preprocessing costs, and lower-order terms. This convention avoids overcharging baselines for costs that are not included in Jaguar's Scalar-Poly count.

\paragraph{Concrete 64-bit multiplication model.}
For prime-modulus RNS baselines, arithmetic is performed modulo RNS prime limbs. Thus, each NTT butterfly multiplication, pointwise product, and key-switching product requires a modular reduction. We model one modular product over a 60-bit RNS prime as
\[
    c_{\mathrm{red}} = 3
\]
64-bit multiplications, corresponding to one raw $64 \times 64$ product and two reduction-related products in a Barrett/Montgomery-style implementation. Additions, shifts, comparisons, and conditional subtractions are not counted.

Let
\[
    n_N = \log_2 N .
\]
For an NTT/RNS-based plaintext--ciphertext polynomial multiplication on an RLWE ciphertext with two polynomial components, we assume the plaintext polynomial can be pre-transformed whenever applicable. Hence, per ciphertext component and per RNS limb, we count one forward NTT on the ciphertext component, one pointwise product, and one inverse NTT:
\[
    \frac{N}{2}n_N + N + \frac{N}{2}n_N + N
    =
    N(n_N+2)
\]
modular products. Since an RLWE ciphertext has two polynomial components, one plaintext--ciphertext polynomial multiplication costs
\[
    \mathsf{PMult}_{64}
    =
    2c_{\mathrm{red}}NL(n_N+2)
    =
    N\Gamma_{\mathrm{mul}},
\]
where
\[
    \Gamma_{\mathrm{mul}}
    =
    6L(\log_2 N+2).
\]

For a homomorphic rotation/automorphism, we count the dominant key-switching work. The automorphism itself is an index permutation and is not counted as multiplication work. We model key switching as
$L^2$ NTTs of decomposed digits, $2L^2$ pointwise products with the switching key, and inverse NTTs for two output components over $L$ RNS limbs. This gives
\[
    \mathsf{Rot}_{64}
    =
    c_{\mathrm{red}}N
    \left[
        \left(\frac{n_N}{2}+2\right)L^2
        +
        (n_N+2)L
    \right]
    =
    N\Gamma_{\mathrm{rot}},
\]
where
\[
    \Gamma_{\mathrm{rot}}
    =
    3\left[
        \left(\frac{\log_2 N}{2}+2\right)L^2
        +
        (\log_2 N+2)L
    \right].
\]
This model is still favorable to the baselines: it does not separately charge basis extension, decomposition bookkeeping, lazy-reduction bookkeeping, additions, memory movement, or cache effects. In contrast, Jaguar uses a power-of-two ciphertext modulus, so coefficient reductions are realized by native wraparound, masking, or shifts rather than prime-modulus reduction. These non-multiplicative operations are not included in the Scalar-Poly count.

\paragraph{Gazelle dense convolution.}
Gazelle uses packed SIMD HE and ciphertext slot permutations for
homomorphic convolution~\cite{juvekar2018gazelle}. In
Table~\ref{tab:conv_complexity}, the HE-PMult column for Gazelle should be read as the cost of SIMD plaintext--ciphertext scalar multiplication translated into the same 64-bit multiplication model.

Let $c_n$ denote the number of channels packed into one ciphertext. We use the optimistic packing assumption
\[
    c_n \approx \frac{N}{HW}
\]
and suppress the small kernel factor $R^2$ as a constant for the dense convolution comparison. Under this assumption, the number of SIMD scalar multiplications is lower-bounded by
\[
    \frac{C_{\mathrm{in}}C_{\mathrm{out}}}{c_n}
    =
    \frac{HW C_{\mathrm{in}}C_{\mathrm{out}}}{N}.
\]
Multiplying by $\mathsf{PMult}_{64}=N\Gamma_{\mathrm{mul}}$ gives
\[
    HW C_{\mathrm{in}}C_{\mathrm{out}}\Gamma_{\mathrm{mul}}.
\]

For rotations, Gazelle has input-rotation and output-rotation variants. We use a lower-envelope estimate by taking the better native packing choice and suppressing constant kernel factors. After substituting
$c_n \approx N/(HW)$, the number of rotations is estimated as
\[
    \frac{HW(C_{\mathrm{in}}+C_{\mathrm{out}})}{N}
    +
    C_{\mathrm{out}}.
\]
Multiplying by $\mathsf{Rot}_{64}=N\Gamma_{\mathrm{rot}}$ gives
\[
    \big(HW(C_{\mathrm{in}}+C_{\mathrm{out}})
    + C_{\mathrm{out}}N\big)\Gamma_{\mathrm{rot}}.
\]

\paragraph{Cheetah dense convolution.}
Cheetah uses coefficient encoding to realize convolution through plaintext--ciphertext polynomial multiplication and avoids homomorphic rotations~\cite{huang2022cheetah}. Under the packed full-convolution view, the number of plaintext--ciphertext polynomial multiplications is
\[
    \frac{HW C_{\mathrm{in}}C_{\mathrm{out}}}{N}.
\]
Therefore, the estimated 64-bit multiplication count is
\[
    \frac{HW C_{\mathrm{in}}C_{\mathrm{out}}}{N}
    \cdot
    N\Gamma_{\mathrm{mul}}
    =
    HW C_{\mathrm{in}}C_{\mathrm{out}}\Gamma_{\mathrm{mul}}.
\]
Since Cheetah's coefficient encoding removes homomorphic rotations from the convolution kernel, the HE-Rot entry is zero.

\paragraph{Rhombus pointwise convolution.}
A pointwise convolution with output spatial size $S_{\mathrm{out}}=H'W'$ can be viewed
as a matrix multiplication between
\[
    X' \in \mathbb{Z}^{S_{\mathrm{out}} \times C_{\mathrm{in}}}
    \quad \text{and} \quad
    W' \in \mathbb{Z}^{C_{\mathrm{in}} \times C_{\mathrm{out}}}.
\]
Thus, the total matrix-multiplication volume is
$S_{\mathrm{out}} C_{\mathrm{in}}C_{\mathrm{out}}$. Rhombus-V2 reduces the number of plaintext--ciphertext multiplications to
\[
    \frac{S_{\mathrm{out}} C_{\mathrm{in}}C_{\mathrm{out}}}{N}
\]
and the number of rotations to
\[
    \sqrt{\frac{S_{\mathrm{out}} C_{\mathrm{in}}C_{\mathrm{out}}}{N}}
\]
using input-output packing and split-point picking~\cite{he2024Rhombus}.
Therefore, the translated PMult cost is
\[
    \frac{S_{\mathrm{out}} C_{\mathrm{in}}C_{\mathrm{out}}}{N}
    \cdot
    N\Gamma_{\mathrm{mul}}
    =
    S_{\mathrm{out}} C_{\mathrm{in}}C_{\mathrm{out}}\Gamma_{\mathrm{mul}},
\]
and the translated rotation cost is
\[
    \sqrt{\frac{S_{\mathrm{out}} C_{\mathrm{in}}C_{\mathrm{out}}}{N}}
    \cdot
    N\Gamma_{\mathrm{rot}}
    =
    \sqrt{S_{\mathrm{out}} C_{\mathrm{in}}C_{\mathrm{out}}N}\Gamma_{\mathrm{rot}}.
\]

\paragraph{Falcon depthwise convolution.}
Falcon optimizes depthwise convolution by packing multiple depthwise filters/channels into one weight polynomial and by applying communication-aware operator tiling~\cite{xu2023falcon}. Let $C_w$ denote the number of channels packed into one weight polynomial.
Ignoring ceiling factors and assuming a single spatial tile, the number of plaintext--ciphertext polynomial multiplications is
\[
    \frac{C_{\mathrm{out}}}{C_w}.
\]
Thus, the estimated 64-bit multiplication count is
\[
    \frac{C_{\mathrm{out}}}{C_w}
    \cdot
    N\Gamma_{\mathrm{mul}}
    =
    \frac{C_{\mathrm{out}}}{C_w}N\Gamma_{\mathrm{mul}}.
\]
Falcon does not require homomorphic rotations in this depthwise coefficient-packing path, so the HE-Rot entry is zero. If spatial tiling is used, the above PMult term is multiplied by the number of spatial tiles. We omit this factor in Table~\ref{tab:conv_complexity} to keep the baseline estimate optimistic.

\section{Detailed Protocols and Algorithm}

\subsection{SPA-Conv: Scalar--Polynomial Accumulation Convolution}
\label{app:SPA-Conv_algo}

We describe Jaguar's linear convolution kernel as \emph{Scalar--Polynomial Accumulation Convolution} (SPA-Conv).
The key idea is to evaluate convolution directly in the coefficient domain by combining scalar--polynomial products with shifted accumulations in the power-of-two ciphertext ring.

Let the input tensor have shape $(H,W,C_{\mathrm{in}})$ and the filter tensor have shape $(K,K,C_{\mathrm{in}},C_{\mathrm{out}})$, where $K=2r+1$ is odd.
Let
\[
W_{\mathrm{pp}} = W + p_l + p_r
\]
denote the padded spatial width and
\[
S = W_{\mathrm{pp}}^2
\]
the padded spatial footprint of one input channel.

SPA-Conv operates on a collection of encrypted input blocks
\[
\mathcal{X} = \{\mathsf{CT}_t\}_{t=0}^{M-1},
\]
where each $\mathsf{CT}_t$ is an RLWE ciphertext over the power-of-two ring $R_q=\mathbb{Z}_q[X]/(X^N+1)$.
Depending on the packing regime, a ciphertext may contain multiple padded input channels or a spatial strip of a single padded channel.

We define a layout map
\[
\Lambda(c,j) = (\iota,\sigma),
\]
which returns the source ciphertext index $\iota$ and the in-ciphertext coefficient offset $\sigma$ for input channel $c$ and output block index $j$.
We use $B$ to denote the number of ciphertext blocks produced per output channel.
In the no-split regime, $B=1$.
In the split regime, $B=\nu$, where $\nu$ is the number of spatial strips used to represent one padded channel.

For dense and pointwise convolutions, the contributing input-channel set is
\[
\mathcal{C}(o) = [C_{\mathrm{in}}],
\]
for each output channel $o$.
For depthwise convolutions, the contributing input-channel set is
\[
\mathcal{C}(o) = \{o\}.
\]
We further define the kernel-offset set
\[
\Omega_K = \{-r,\ldots,r\}\times\{-r,\ldots,r\},
\]
and the coefficient-domain offset function
\[
\delta(u,v)=uW_{\mathrm{pp}}+v.
\]

\paragraph{Packing layouts.}
A convenient abstraction is:
\[
\Lambda(c,0)=
\Bigl(
\bigl\lfloor c/\eta \bigr\rfloor,\;
(c\bmod \eta)\,S
\Bigr),
\qquad
\eta=\left\lfloor \frac{N}{S} \right\rfloor,
\]
in the packed regime $S\le N$, and
\[
\Lambda(c,j)=(cB+j,\,0),
\qquad j\in[B],
\]
in the split regime $S>N$.

For compactness, define
\[
\mathcal{I}_{o,j} = \mathcal{C}(o) \times \Omega_K,
\]
namely the set of all input-channel and spatial-offset pairs contributing to output channel $o$ and block $j$.

\begin{algorithm}[H]
\caption{SPA-Conv over a power-of-two ciphertext ring}
\label{alg:SPA-Conv}
\begin{algorithmic}[1]
\STATE \textbf{Input:} encrypted input blocks $\mathcal{X}=\{\mathsf{CT}_t\}$, filter tensor $\mathbf{W}$, layout map $\Lambda$, padded width $W_{\mathrm{pp}}$, kernel size $K=2r+1$, truncation amount $\tau$
\STATE \textbf{Output:} truncated encrypted output blocks $\{\mathsf{Y}_{o,j}\}$

\FOR{$o=0$ to $C_{\mathrm{out}}-1$}
    \FOR{$j=0$ to $B-1$}
        \STATE choose an anchor term $(c^\star,u^\star,v^\star)$
        \STATE $(\iota^\star,\sigma^\star) \gets \Lambda(c^\star,j)$
        \STATE $\eta^\star \gets \sigma^\star + \delta(u^\star,v^\star)$
        \STATE $\mathsf{Y}_{o,j} \gets \mathsf{InitTerm}\!\bigl(W[u^\star,v^\star,c^\star,o], \mathsf{CT}_{\iota^\star}, s^\star\bigr)$
        \FORALL{$(c,u,v) \in \mathcal{I}_{o,j} \setminus \{(c^\star,u^\star,v^\star)\}$}
            \STATE $(\iota,\sigma) \gets \Lambda(c,j)$
            \STATE $s \gets \sigma + \delta(u,v)$
            \STATE $\mathsf{Y}_{o,j} \gets \mathsf{AccTerm}\!\bigl(\mathsf{Y}_{o,j}, W[u,v,c,o], \mathsf{CT}_{\iota}, s\bigr)$
        \ENDFOR
        \STATE $\mathsf{Y}_{o,j} \gets \mathsf{Trunc}_{\tau}(\mathsf{Y}_{o,j})$
        \STATE $\mathsf{Y}_{o,j} \gets \mathsf{MaskReduce}(\mathsf{Y}_{o,j})$
    \ENDFOR
\ENDFOR
\STATE \textbf{return} $\{\mathsf{Y}_{o,j}\}$
\end{algorithmic}
\end{algorithm}

\paragraph{Primitive operations.}
$\mathsf{InitTerm}(\alpha,\mathsf{CT},s)$ initializes an accumulator with the shifted scalar term $\alpha \cdot \mathrm{Shift}_s(\mathsf{CT})$.
$\mathsf{AccTerm}(\mathsf{Y},\alpha,\mathsf{CT},s)$ adds the shifted scalar term $\alpha \cdot \mathrm{Shift}_s(\mathsf{CT})$ into the accumulator $\mathsf{Y}$.
Here $\mathrm{Shift}_s(\cdot)$ denotes the negacyclic coefficient shift in $R_q$.
$\mathsf{Trunc}_{\tau}$ is Jaguar's exact ciphertext-side truncation from Section~3.1, and $\mathsf{MaskReduce}$ denotes optional local masking / reduction before the subsequent boundary conversion.

\paragraph{Special cases.}
The same algorithm specializes to several common CNN operators:
\begin{itemize}
    \item \textbf{Pointwise $1\times1$ convolution:} set $K=1$, so $\Omega_K=\{(0,0)\}$.
    \item \textbf{Dense $3\times3$ convolution:} set $K=3$ and 
    $\mathcal{C}(o)=[C_{\mathrm{in}}]$.
    \item \textbf{Depthwise $3\times3$ convolution:} set $K=3$ and $\mathcal{C}(o)=\{o\}$.
\end{itemize}
Thus, SPA-Conv provides a single arithmetic skeleton for pointwise, dense, and depthwise convolutions.

\subsection{Client-Side Grouped Reconstruction}
\label{app:grouped_reconstruction}
\begin{algorithm}[h]
\caption{Client-side grouped reconstruction}
\label{alg:grouped_reconstruct}
\begin{algorithmic}[1]
\REQUIRE Grouped exports $\mathcal{G}=\{G_\gamma\}$, secret key $\hat{s}(X)$, parameters $(Q,P,f)$
\ENSURE Client additive share $\langle\mathbf{T}'\rangle_C$

\STATE Initialize $\langle\mathbf{T}'\rangle_C$ with zeros
\STATE Precompute an NTT representation of $\hat{s}(X)$ over an auxiliary NTT prime $q_{\mathrm{ntt}}$
\FORALL{$G_\gamma=(\hat{a}_\gamma,\mathbf{b}_\gamma,\boldsymbol{\kappa}_\gamma,
\mathbf{ch}_\gamma,\mathbf{off}_\gamma,\beta_\gamma)\in\mathcal{G}$}
    \STATE Compute
    \[
        \hat{d}_\gamma(X) \leftarrow \hat{a}_\gamma(X)\hat{s}(X) \pmod{X^N+1}
    \]
    using the auxiliary NTT prime and map the result back to residues modulo $2^Q$
    \FOR{$i=0$ to $|\mathbf{b}_\gamma|-1$}
        \STATE $\kappa \leftarrow \boldsymbol{\kappa}_\gamma[i]$
        \STATE $u \leftarrow \hat{d}_\gamma[\kappa]+\mathbf{b}_\gamma[i]\pmod{2^Q}$
        \STATE
        \[
            z \leftarrow
            \left(
            (u+2^{Q-P-1-f}) \bmod 2^Q
            \right)\gg (Q-P)
            \pmod{2^{P-f}}
        \]
        \STATE Place $z$ into channel $\mathbf{ch}_\gamma[i]$ and output position $\mathbf{off}_\gamma[i]$
    \ENDFOR
\ENDFOR
\RETURN $\langle\mathbf{T}'\rangle_C$
\end{algorithmic}
\end{algorithm}

The grouped export protocol shares the RLWE linear component across many exported coefficients, but each coefficient still has a coefficient-specific decryption equation. This subsection makes the reconstruction rule explicit.

Let a truncated source RLWE ciphertext be
\[
    Y_\gamma=(\hat{b}_\gamma(X),\hat{a}_\gamma(X))\in R_{2^Q}^2,
\]
and let the secret key be \(\hat{s}(X)\). The decrypted polynomial before fixed-point decoding is
\[
    \hat{u}_\gamma(X)=\hat{b}_\gamma(X)+\hat{a}_\gamma(X)\hat{s}(X)\pmod{X^N+1,2^Q}.
\]
For a logical output value \((o,\xi)\), the extraction map gives
\[
    \Phi(o,\xi)=(\gamma,\kappa),
\]
so the client must reconstruct the coefficient
\[
    \hat{u}_{\gamma,\kappa}
    =
    \hat{b}_\gamma[\kappa]+(\hat{a}_\gamma \hat{s})[\kappa]\pmod{2^Q}.
\]
Therefore, the grouped export record carries the coefficient indices
\(\boldsymbol{\kappa}_\gamma\), and the client reconstructs the polynomial
\(\hat{a}_\gamma(X)\hat{s}(X)\) once per source ciphertext group. The resulting coefficient
\((\hat{a}_\gamma \hat{s})[\kappa_i]\) is then reused with the corresponding exported
constant term \(\mathbf{b}_\gamma[i]\).

In our implementation, this multiplication is performed locally by the client using an auxiliary NTT prime \(q_{\mathrm{ntt}}\). This auxiliary NTT is only an implementation technique for client-side reconstruction; it does not change the Jaguar ciphertext modulus, the protocol messages, or the communication pattern. The client first maps the coefficients of \(a_\gamma\) and the ternary secret key to residues modulo \(q_{\mathrm{ntt}}\), computes
\[
    \hat{d}_\gamma(X)=\hat{a}_\gamma(X)\hat{s}(X)\pmod{X^N+1}
\]
via NTT multiplication, and maps the selected coefficients back to residues modulo \(2^Q\). For each exported coefficient \(i\), it then computes
\[
    u_i =
    \hat{d}_\gamma[\boldsymbol{\kappa}_\gamma[i]]
    +
    \mathbf{b}_\gamma[i]
    \pmod{2^Q}.
\]
Finally, because the ciphertext has already undergone \(f\)-bit HE-side truncation, the client applies the same two-step fixed-point decoding rule as Algorithm~\ref{alg:dec}:
\[
    z_i =
    \left(
    (u_i+2^{Q-P-1-f}) \bmod 2^Q
    \right)\gg (Q-P)
    \pmod{2^{P-f}}.
\]
The decoded value \(z_i\) is written to the output location specified by
\((\mathbf{ch}_\gamma[i],\mathbf{off}_\gamma[i])\). Thus, the grouped export shares the RLWE linear component in communication, while reconstruction remains coefficient-correct through the explicit coefficient index
\(\boldsymbol{\kappa}_\gamma[i]\).

\paragraph{Auxiliary NTT modulus.}
The auxiliary prime must be large enough so that the integer coefficients of
\(\hat{a}_\gamma(X)\hat{s}(X)\) do not wrap modulo \(q_{\mathrm{ntt}}\). Let \(h=\|\hat{s}\|_0\)
be the number of nonzero secret-key coefficients. After \(f\)-bit ciphertext-side truncation, the exported linear component has effective magnitude below
\(2^{Q-f}\). A conservative sufficient condition is
\[
    q_{\mathrm{ntt}}/2 > h\cdot 2^{Q-f}.
\]
For our main setting \(N=2048\), \(Q=54\), and \(f=9\), we have
\(h\le N=2^{11}\), so the product bound is at most \(2^{56}\). A 60-bit NTT-friendly prime has centered range about \(2^{59}\), which is sufficient for the grouped reconstruction path. We choose a 60-bit prime $q_{\mathrm{ntt}}\equiv 1 \pmod{2N}$ and lift
coefficients to centered representatives before the auxiliary-prime NTT product. If the export is performed without HE-side truncation (\(f=0\)), this single-prime bound is no longer sufficient; in that case one can either use a multi-prime CRT NTT or a direct power-of-two negacyclic multiplication. Jaguar's evaluated protocol performs grouped export after HE-side truncation.

\section{Complexity Analysis of Linear-Layer Protocol of Jaguar}

\subsection{Complexity Analysis of SPA-Conv}
\label{app:SPA-Conv_complexity}

We analyze SPA-Conv at the level of coefficient-wise integer operations.
Our goal is to expose the arithmetic structure of Jaguar's linear convolution kernel independently of any specific code path.

\paragraph{Cost model.}
An RLWE ciphertext in Jaguar consists of two degree-$N$ polynomials.
We count five types of low-level integer operations:
\emph{multiplications}, \emph{additions/subtractions}, \emph{arithmetic shifts}, \emph{bitwise masks}, and \emph{plaintext-mask additions}.
We ignore memory writes and index arithmetic, since they do not affect the leading arithmetic cost.

For one ciphertext block, the SPA-Conv primitives have the following costs:
\begin{itemize}
    \item $\mathsf{InitTerm}$ touches both RLWE components over all $N$ coefficients:
    \[
    \mathrm{Mul}(\mathsf{InitTerm}) = 2N.
    \]
    \item $\mathsf{AccTerm}$ performs one shifted scalar accumulation on both RLWE components.
    It requires one coefficient-wise scalar multiplication and one accumulator update per coefficient, together with sign-adjustments induced by negacyclic wrap-around.
    We conservatively upper-bound its arithmetic cost by
    \[
    \mathrm{Mul}(\mathsf{AccTerm}) = 2N,
    \qquad
    \mathrm{Add/Sub}(\mathsf{AccTerm}) \le 4N.
    \]
    \item $\mathsf{Trunc}_{\tau}$ performs one arithmetic right shift on each coefficient of both ciphertext components:
    \[
    \mathrm{Shift}(\mathsf{Trunc}_{\tau}) = 2N.
    \]
    \item $\mathsf{MaskReduce}$ applies power-of-two reduction to both ciphertext components:
    \[
    \mathrm{Mask}(\mathsf{MaskReduce}) = 2N.
    \]
    \item If a plaintext mask is added before export, only the first ciphertext component is updated:
    \[
    \mathrm{PlainAdd} = N.
    \]
\end{itemize}

\paragraph{Number of convolution terms.}
For one output channel $o$ and one output block $b$, define
\[
T_o = |\mathcal{C}(o)|\,K^2.
\]
This is the total number of shifted scalar terms accumulated into that output block.
SPA-Conv uses one anchor term to initialize the accumulator and then accumulates the remaining $T_o-1$ terms.

Therefore, the arithmetic cost for one output block is
\[
\mathrm{Mul}_{o,j}
=
2N + 2N(T_o-1)
=
2N\,T_o,
\]
\[
\mathrm{Add/Sub}_{o,j}
\le
4N(T_o-1),
\]
and the local post-processing cost is
\[
\mathrm{Shift}_{o,j}=2N,
\qquad
\mathrm{Mask}_{o,j}=2N,
\qquad
\mathrm{PlainAdd}_{o,j}=N.
\]

\paragraph{General layer complexity.}
Let $B$ denote the number of ciphertext blocks produced per output channel.
Then the total arithmetic cost of one SPA-Conv layer is
\[
\mathrm{Mul}_{\mathrm{SPA-Conv}}
=
2N B \sum_{o=0}^{C_{\mathrm{out}}-1} T_o
=
2N B K^2 \sum_{o=0}^{C_{\mathrm{out}}-1} |\mathcal{C}(o)|,
\]
\[
\mathrm{Add/Sub}_{\mathrm{SPA-Conv}}
\le
4N B \sum_{o=0}^{C_{\mathrm{out}}-1} (T_o-1)
=
4N B \left(
K^2 \sum_{o=0}^{C_{\mathrm{out}}-1} |\mathcal{C}(o)|
-
C_{\mathrm{out}}
\right),
\]
\[
\mathrm{Shift}_{\mathrm{SPA-Conv}} = 2N B C_{\mathrm{out}},
\qquad
\mathrm{Mask}_{\mathrm{SPA-Conv}} = 2N B C_{\mathrm{out}},
\qquad
\mathrm{PlainAdd}_{\mathrm{SPA-Conv}} = N B C_{\mathrm{out}}.
\]

These expressions separate the convolution kernel from the boundary conversion.
In particular, the grouped output-export procedure used by Jaguar is linear in the number of exported coefficients and depends on the chosen export format; we therefore treat it separately from the arithmetic kernel analyzed here.

\paragraph{Dense and depthwise specializations.}
For dense and pointwise convolutions,
\[
|\mathcal{C}(o)| = C_{\mathrm{in}}
\qquad \forall o,
\]
and therefore
\[
\mathrm{Mul}_{\mathrm{dense}}
=
2N B C_{\mathrm{out}} C_{\mathrm{in}} K^2,
\]
\[
\mathrm{Add/Sub}_{\mathrm{dense}}
\le
4N B C_{\mathrm{out}} (C_{\mathrm{in}}K^2 - 1),
\]
\[
\mathrm{Shift}_{\mathrm{dense}} = 2N B C_{\mathrm{out}},
\quad
\mathrm{Mask}_{\mathrm{dense}} = 2N B C_{\mathrm{out}},
\quad
\mathrm{PlainAdd}_{\mathrm{dense}} = N B C_{\mathrm{out}}.
\]

For depthwise convolutions,
\[
|\mathcal{C}(o)| = 1
\qquad \forall o,
\]
and therefore
\[
\mathrm{Mul}_{\mathrm{dw}}
=
2N B C_{\mathrm{out}} K^2,
\]
\[
\mathrm{Add/Sub}_{\mathrm{dw}}
\le
4N B C_{\mathrm{out}} (K^2 - 1),
\]
\[
\mathrm{Shift}_{\mathrm{dw}} = 2N B C_{\mathrm{out}},
\quad
\mathrm{Mask}_{\mathrm{dw}} = 2N B C_{\mathrm{out}},
\quad
\mathrm{PlainAdd}_{\mathrm{dw}} = N B C_{\mathrm{out}}.
\]

\paragraph{Common special cases.}
The formulas above directly yield the following cases of practical interest.

\subparagraph{Pointwise $1\times1$ convolution.}
Setting $K=1$ in the dense formula gives
\[
\mathrm{Mul}_{1\times1}
=
2N B C_{\mathrm{out}} C_{\mathrm{in}},
\]
\[
\mathrm{Add/Sub}_{1\times1}
\le
4N B C_{\mathrm{out}} (C_{\mathrm{in}} - 1),
\]
\[
\mathrm{Shift}_{1\times1}=2N B C_{\mathrm{out}},
\quad
\mathrm{Mask}_{1\times1}=2N B C_{\mathrm{out}},
\quad
\mathrm{PlainAdd}_{1\times1}=N B C_{\mathrm{out}}.
\]

\subparagraph{Dense $3\times3$ convolution.}
Setting $K=3$ in the dense formula gives
\[
\mathrm{Mul}_{3\times3\text{-dense}}
=
18N B C_{\mathrm{out}} C_{\mathrm{in}},
\]
\[
\mathrm{Add/Sub}_{3\times3\text{-dense}}
\le
4N B C_{\mathrm{out}} (9C_{\mathrm{in}} - 1),
\]
\[
\mathrm{Shift}_{3\times3\text{-dense}}=2N B C_{\mathrm{out}},
\quad
\mathrm{Mask}_{3\times3\text{-dense}}=2N B C_{\mathrm{out}},
\quad
\mathrm{PlainAdd}_{3\times3\text{-dense}}=N B C_{\mathrm{out}}.
\]

\subparagraph{Depthwise $3\times3$ convolution.}
Setting $K=3$ in the depthwise formula gives
\[
\mathrm{Mul}_{3\times3\text{-dw}}
=
18N B C_{\mathrm{out}},
\]
\[
\mathrm{Add/Sub}_{3\times3\text{-dw}}
\le
32N B C_{\mathrm{out}},
\]
\[
\mathrm{Shift}_{3\times3\text{-dw}}=2N B C_{\mathrm{out}},
\quad
\mathrm{Mask}_{3\times3\text{-dw}}=2N B C_{\mathrm{out}},
\quad
\mathrm{PlainAdd}_{3\times3\text{-dw}}=N B C_{\mathrm{out}}.
\]

\paragraph{Interpretation.}
These expressions make the arithmetic structure of Jaguar explicit.
Dense $K\times K$ convolution scales as $\Theta(NBC_{\mathrm{out}}C_{\mathrm{in}}K^2)$, pointwise convolution scales as $\Theta(NBC_{\mathrm{out}}C_{\mathrm{in}})$, and depthwise convolution scales as $\Theta(NBC_{\mathrm{out}}K^2)$.
Hence, relative to dense $3\times3$ convolution, pointwise convolution removes the spatial $K^2$ factor, whereas depthwise convolution removes the multiplicative $C_{\mathrm{in}}$ factor.
The packing layout only changes the block multiplier $B$: in the no-split regime, $B=1$, while in the split regime, $B=\nu$.

\subsection{Complexity Analysis of Grouped Output Export}

\paragraph{Grouped Output Export.}
\label{app:grouped_export}

After SPA-Conv produces encrypted output blocks $\{\mathsf{Y}_{o,j}\}$, Jaguar converts them into additive shares by exporting only the coefficients that correspond to valid convolution outputs.
For each source RLWE ciphertext, Jaguar extracts the valid output coefficients as coefficient-wise LWE ciphertexts of the form $(b_\kappa, \hat{a})$, where all extracted coefficients from the same source ciphertext share the same linear component $\hat{a}$.
Instead of transmitting these LWEs independently, Jaguar packs them into a grouped export record consisting of one shared $\hat{a}$, multiple $b$ values, and output-location metadata.

Let
\[
\Phi : (o,\xi) \mapsto (\gamma,\kappa)
\]
denote the extraction map induced by the packing layout, where
$o$ is the output channel index, $\xi$ is the logical output position,
$\gamma$ is the source RLWE ciphertext index, and $\kappa$ is the coefficient index inside that ciphertext.

For each source ciphertext $\mathsf{Y}_\gamma = (b_\gamma, \hat{a}_\gamma)$, define the extracted set
\[
\mathcal{P}_\gamma
=
\{(o,\xi,\kappa) : \Phi(o,\xi) = (\gamma,\kappa)\}.
\]
Jaguar forms one grouped export per non-empty set $\mathcal{P}_\gamma$.


\begin{algorithm}[H]
\caption{Grouped output export}
\label{alg:grouped_export}
\begin{algorithmic}[1]
\REQUIRE Truncated encrypted output blocks $\{Y_\gamma\}$ and extraction map $\Phi$
\ENSURE Grouped exports $\mathcal{G}=\{G_\gamma\}$

\STATE Initialize $\mathcal{G}\leftarrow\emptyset$
\FORALL{source ciphertext indices $\gamma$ such that $P_\gamma\neq\emptyset$}
    \STATE Let $Y_\gamma=(b_\gamma,\hat{a}_\gamma)$
    \STATE Initialize empty vectors $\mathbf{b}_\gamma,\boldsymbol{\kappa}_\gamma,\mathbf{ch}_\gamma,\mathbf{off}_\gamma$
    \FORALL{$(o,\xi,\kappa)\in P_\gamma$}
        \STATE Append $b_\gamma[\kappa]$ to $\mathbf{b}_\gamma$
        \STATE Append $\kappa$ to $\boldsymbol{\kappa}_\gamma$
        \STATE Append $o$ to $\mathbf{ch}_\gamma$
        \STATE Append $\xi$ to $\mathbf{off}_\gamma$
    \ENDFOR
    \STATE Choose transport bit-width $\beta_\gamma$
    \STATE $G_\gamma\leftarrow
    (\hat{a}_\gamma,\mathbf{b}_\gamma,\boldsymbol{\kappa}_\gamma,
    \mathbf{ch}_\gamma,\mathbf{off}_\gamma,\beta_\gamma)$
    \STATE Append $G_\gamma$ to $\mathcal{G}$
\ENDFOR
\RETURN $\mathcal{G}$
\end{algorithmic}
\end{algorithm}

\paragraph{Complexity of Grouped Output Export.}
\label{app:grouped_export_complexity}

We now analyze the boundary-conversion cost of grouped output export.
This cost is separate from the SPA-Conv arithmetic kernel and depends on how many logical output coefficients are exported from each source ciphertext.

\paragraph{Notation.}
Let $G$ denote the number of non-empty export groups.
For group $g \in [G]$, let
\[
m_g
\]
be the number of exported coefficients in that group, and let
\[
\beta_g
\]
be the transport bit-width used to pack its values.
Let
\[
E = \sum_{g=1}^{G} m_g
\]
denote the total number of exported output coefficients.
For convolution, $E$ equals the total number of logical output values, i.e.,
\[
E = C_{\mathrm{out}} H' W'.
\]

\paragraph{Server-side grouping cost.}
For each group, the server copies one shared RLWE linear component of length $N$ and extracts $m_g$ constant coefficients.
Hence, the arithmetic work of forming all groups is
\[
\Theta\!\left(GN + E\right)
\]
coefficient reads/writes.
If bit-packing is included, the bit-level packing work is
\[
\Theta\!\left(\sum_{g=1}^{G} \beta_g (N + m_g)\right).
\]

\paragraph{Communication cost.}
Group \(g\) contains: (i) one shared RLWE linear component of length \(N\),
(ii) \(m_g\) exported constant terms, (iii) \(m_g\) coefficient indices, and
(iv) metadata describing the logical output locations of those \(m_g\) values.
Thus, the transmitted payload of group \(g\) is
\[
    \beta_g(N+m_g)
    +
    m_g\left(
    \lceil\log_2 N\rceil
    +
    \lceil\log_2 C_{\mathrm{out}}\rceil
    +
    \lceil\log_2(H'W')\rceil
    \right)
\]
bits, up to lower-order header terms. Summing over all groups,
\[
    \mathrm{Comm}_{\mathrm{grouped}}
    =
    \sum_{g=1}^{G}
    \left[
    \beta_g(N+m_g)
    +
    m_g\left(
    \lceil\log_2 N\rceil
    +
    \lceil\log_2 C_{\mathrm{out}}\rceil
    +
    \lceil\log_2(H'W')\rceil
    \right)
    \right]
    +O(G).
\]

\paragraph{Client-side reconstruction cost.}
For each group, the client reconstructs the negacyclic product
\(a_\gamma(X)s(X)\) once and then selects the \(m_g\) requested coefficients.
With the auxiliary NTT implementation, the cost is
\[
    \Theta(GN\log N + E)
\]
arithmetic operations, after one global precomputation of the NTT representation of the secret key. The \(M\) term accounts for coefficient selection, addition of the exported constant terms, and power-of-two decode/round steps.
This client-side cost is still separate from the server-side \textsc{SPA-Conv}
kernel and does not affect communication.

A direct power-of-two implementation could instead compute the same product by negacyclic add/sub accumulation using the ternary secret key; the protocol is unchanged, but the implementation cost becomes \(O(GNh+E)\), where
\(h=\|\hat{s}\|_0\).

\paragraph{Comparison with per-output export.}
Without grouping, exporting each logical output independently would require one length-$N$ linear component per output value, leading to communication proportional to $EN$.
Grouped export reduces this to $GN + E$, where typically $G \ll E$ because many output coefficients originate from the same source RLWE ciphertext.
This is the main reason Jaguar's HE-to-share boundary remains efficient even when the logical convolution output is dense.






\end{document}